\def\blue#1{{#1}}
\def\green#1{{#1}}
\xdef\old#1{}
\def\ao{}
\def\aoc{}
\def\jrev#1{{#1}}
\def\alex#1{{#1}}
\providecommand{\prt}[1]{\left( #1 \right)}
\begin{document}

\title{Matrix $p$-norms are NP-hard to approximate if $p \neq 1,
2, \infty$.}

\author{Julien M. Hendrickx \blue{and} Alex Olshevsky \thanks{The authors are with the
Laboratory for Information and Decision Systems, Massachusetts
Institute of Technology, Cambridge, MA, jm\_hend@mit.edu,
alex\_o@mit.edu}. This research was supported by the National
Science Foundation under grant ECCS-0701623. \green{Julien Hendrickx
holds postdoctoral fellowships from the F.R.S.-FNRS (Belgian
National Fund for Scientific Research) and the Belgian American
Education Foundation, and is on leave from the department of
mathematical engineering of the Universit\'e catholique de Louvain,
Louvain-la-Neuve, Belgium.}}

\maketitle

\begin{abstract} We show that for any rational $p \in [1,\infty)$ except
$p=1,2$, unless $P=NP$, there is no polynomial-time algorithm
\jrev{which approximates} the matrix $p$-norm to arbitrary relative
precision. We also show that for any rational $p \in [1,\infty)$
including $p=1,2$, unless $P=NP$, there is no polynomial-time
algorithm \jrev{which} approximates the $\infty,p$ mixed norm to
some fixed relative precision.
\end{abstract}

\section{Introduction}

{\aoc The $p$-norm of a matrix $A$ is defined as
\[ ||A||_p = \max_{\alex{||x||_p = 1}} ||Ax||_p.
\]} We consider the problem of computing the matrix $p$-norm \alex{to relative error $\epsilon$}, defined as
follows: given the inputs (i) a matrix $A \in R^{n \times n}$ with
rational entries (ii) an error tolerance $\epsilon$ which is a
positive rational number, output a rational number $r$ satisfying
{\aoc \[ \big|r - ||A||_p \big|  \leq \epsilon ||A||_p
\]} We will use the standard bit model of computation.  When $p=\infty$ {\aoc or $p=1$} the $p$-matrix norm is the largest of the
row/column sums, and thus may be {\aoc easily} computed exactly.
When $p=2$, this problem reduces to computing an eigenvalue of $A^T
A$ and thus can be solved in polynomial time in $n, \log
\frac{1}{\epsilon}$ and the bit-size of the entries of $A$. Our main
result suggests that the case of $p \notin \{1,2, \alex{\infty}\}$
may be different:

\bigskip

\begin{theorem} \label{mainthm}
For any rational $p \in [1,\infty)$ except $p=1,2$, unless $P=NP$,
there is no algorithm which \alex{computes the $p$-norm of a matrix
with entries in $\{-1,0,1\}$ to relative error $\epsilon$} with
running time polynomial in \alex{$n$}, $\frac{1}{\epsilon}$.
\end{theorem}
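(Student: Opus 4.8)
\bigskip
\noindent\textbf{Proof plan.}
The plan is to give a polynomial-time reduction from an NP-complete problem --- for instance from \textsc{Max-Cut}, or from an exact problem such as $3$-\textsc{Sat} or \textsc{Subset-Sum} --- to the problem of approximating $\|B\|_p$ for a suitable matrix $B$ with entries in $\{-1,0,1\}$, arranged so that the ``yes'' and ``no'' instances are separated by a relative gap of size $1/\mathrm{poly}(n)$. Given such a reduction, an algorithm with running time polynomial in $n$ and $1/\epsilon$ that computes $\|B\|_p$ to relative error $\epsilon$ could be run with $\epsilon$ set to half of that gap, and would then decide the NP-complete problem in polynomial time, forcing $P=NP$. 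Two reductions on the values of $p$ are available from the outset. First, $p=1$, $p=2$ and $p=\infty$ are excluded by hypothesis. Second, since $\|B\|_p=\|B^{T}\|_q$ whenever $1/p+1/q=1$, and transposition preserves both the entry restriction and the relative-error approximability, it suffices to produce the reduction for rational $p\in(2,\infty)$; the remaining range $p\in(1,2)$ then follows by applying it to $B^{T}$.

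The matrix $B$ is built from two groups of rows. The first, \emph{structural}, group is a scaled fixed block --- a multiple of the identity, of a $\pm 1$ matrix of Hadamard/expander type, or of an incidence-type operator --- chosen so that any vector $x$ with $\|x\|_p=1$ attaining (or nearly attaining) $\|B\|_p$ is forced to lie close to a scaled sign vector $n^{-1/p}\sigma$ with $\sigma\in\{-1,1\}^{n}$, with the deviation from such a vector controlled quantitatively by a weight or block-size parameter $N$ that we take polynomial in $n$. The second, \emph{combinatorial}, group consists of rows encoding the instance --- for a graph, the rows $e_u-e_v$ over its edges, so that $\langle e_u-e_v,\sigma\rangle\in\{-2,0,2\}$ records whether an edge is cut --- weighted so small that their total contribution to $\|Bx\|_p^{p}$ is only a lower-order term, equal, when $x\approx n^{-1/p}\sigma$, to an affine function of the combinatorial value of $\sigma$. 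With the parameters balanced so that all approximation errors introduced by the structural step are below the $1/\mathrm{poly}(n)$ combinatorial gap, reading off $\|B\|_p$ to the stated precision determines the optimal combinatorial value.

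The crux --- and the step I expect to be the main obstacle --- is the structural lemma: showing that the optimizer genuinely cannot gain by being far from every scaled sign vector. This is exactly where $p\notin\{1,2,\infty\}$ is used. For $p=2$ the structural objective is a quadratic form, maximized over the sphere by an eigenvector that need not resemble a sign vector at all, so the argument must exploit that $t\mapsto|t|^{p}$ is not quadratic: one analyzes quantities built from it, such as $|1+t|^{p}+|1-t|^{p}=2+p(p-1)t^{2}+O_p(t^{4})$, and uses strict convexity together with the behavior of the higher-order terms (which are forced to vanish only at $p=2$ and change qualitatively once $p>2$) to conclude that redistributing the mass of $x$ unevenly strictly decreases the structural term by more than the combinatorial term can ever recover. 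This estimate has to be robust: an approximation algorithm only returns an approximate optimizer, so one needs that such an $x$ is still close to some $n^{-1/p}\sigma$ and that the corresponding $\sigma$ can be rounded out in polynomial time. Once the structural lemma is established, what remains --- choosing $N$ polynomially in $n$, verifying the $\{-1,0,1\}$ entry restriction, tracking constants, and invoking the transpose identity to pass from $(2,\infty)$ to $(1,2)$ --- is bookkeeping.
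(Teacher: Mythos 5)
Your high-level architecture coincides with the paper's: a tall matrix with a heavily weighted ``structural'' block forcing near-optimizers toward $\{-1,1\}^n$, lightly weighted edge rows $e_u-e_v$ encoding cuts, a $1/\mathrm{poly}(n)$ gap reduction from max-cut (the paper in fact recovers the cut value exactly, to additive error $1/4+1/n$, and uses integrality), the duality $\|B\|_p=\|B^{T}\|_{p'}$ to dispose of $p\in(1,2)$, and block repetition in place of scaling to keep entries in $\{-1,0,1\}$. But the step you yourself flag as ``the main obstacle'' --- the structural lemma --- is exactly where all of the content lives, and it is not supplied. Worse, the structural block must satisfy \emph{two} properties simultaneously, and you only articulate one of them: besides (a) forcing any $x$ on the sphere $\|x\|_p^p=n$ that nearly maximizes $\|Ax\|_p$ to lie within $1/\mathrm{poly}(n)$ of some sign vector, it must also satisfy (b) $\|A\sigma\|_p^p$ takes the \emph{same} value at every one of the $2^n$ sign vectors $\sigma$ --- otherwise the huge weight on the structural block makes the reduction compute the maximum of the structural term over $\{-1,1\}^n$ rather than the maximum cut. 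Your three candidate blocks illustrate the tension and none resolves it: a multiple of the identity satisfies (b) but gives no concentration at all ($\|x\|_p$ is constant on the sphere); a Hadamard matrix concentrates, for $p>2$, only at its own $2n$ columns (e.g.\ for $H_4$ and $\sigma=(1,1,1,-1)$ one gets $\|H\sigma\|_p^p=4\cdot2^p$ versus $4^p$ at a column), so it violates (b) and would collapse the $2^n$-fold degeneracy the reduction needs; ``incidence-type'' is the right family but is left unspecified.

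The paper's resolution is the $2n\times n$ matrix whose rows are $x_i-x_{i+1}$ and $x_i+x_{i+1}$ around a cycle: then $\|A\sigma\|_p^p=n2^p$ for \emph{every} $\sigma\in\{-1,1\}^n$, giving (b) for free, while the sharpened inequality $|x+y|^p+|x-y|^p\le 2^{p-1}(|x|^p+|y|^p)-\tfrac{(|x|-|y|)^2}{4}\bigl(p(p-1)(|x|+|y|)^{p-2}-2(|x|-|y|)^{p-2}\bigr)$ (valid for $p\ge2$, with strictly positive deficiency exactly when $p>2$ and $|x|\neq|y|$) gives the quantitative version of (a) after a pigeonhole step locating two adjacent coordinates whose magnitudes differ by at least $c/n$. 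It is a near miss that your own expansion $|1+t|^p+|1-t|^p=2+p(p-1)t^2+O_p(t^4)$ is precisely this local computation: applied pairwise to consecutive coordinates it \emph{is} the matrix $A$, but you did not take that step, and without it the reduction cannot be assembled. One smaller point: the hardness statement concerns an algorithm that outputs only a number $r$ approximating $\|B\|_p$, so your concern about rounding an ``approximate optimizer returned by the algorithm'' is misplaced --- the rounding argument is applied inside the analysis to the true optimizer $x^*$ in order to relate $\|B\|_p^p$ itself to $\max_{\sigma\in\{-1,1\}^n}\|B\sigma\|_p^p$, not to any vector the algorithm produces.
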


\bigskip

\green{{\aoc On the way to our result, we also slightly improve the
NP-hardness
 result for the mixed norm $||A||_{\infty,p}=\max_{||x||_\infty
\leq 1} ||Ax||_p$ from \cite{S05}. Specifically, we show that for
every rational $p {\aoc \geq } 1$, there exists an error tolerance
$\epsilon(p)$ such that unless $P=NP$, there is no polynomial time
algorithm approximating $||A||_{\infty,p}$ with a relative error
smaller than $\epsilon(p)$.}}

\subsection{Previous work}

When $p$ is an integer, computing the matrix norm can be recast as
solving a polynomial optimization problem. {\ao These are known to
be hard to solve in general \cite{MK87};  however, because the
matrix norm problem has a special structure one cannot immediately
rule out the possibility of a polynomial-time solution.} \blue{A few
hardness results are available in the literature} for mixed matrix
norms $||A||_{p,q}=\max_{||x||_p \leq 1} ||Ax||_q$. \blue{Rohn has
shown in} \cite{R00} that computing the $||A||_{\infty,1}$ norm is
NP-hard. \blue{In her thesis, Steinberg} \cite{S05} proved
\blue{more generally} that computing $||A||_{p,q}$ is NP-hard when
$1 \leq q<p \leq \infty$. We refer the reader to \cite{S05} for a
discussion of applications of the mixed matrix norm problems to
robust optimization.

It is conjectured in \cite{S05} that there are only three cases in
which mixed norms are computable in polynomial time: $p=1$ or $q =
\infty$ or $p=q=2$. Our work makes  progress on this question by
settling the ``diagonal'' case of $p=q$; however, the case of $p <
q$, as far as the authors are aware, is open.

\subsection{Outline}

\blue{We begin in Section \ref{infinitysubsection} by providing a
proof of the NP-hardness of \green{approximating} the mixed norm
$||\cdot||_{\infty,p}$ \green{within some fixed relative error,} for
any rational $p \geq 1$. {\aoc The proof may be summarized as
follows: \alex{observe that for any matrix M,} $\max_{||x||_\infty =
1}||Mx||_{p}$ is always attained \jrev{at} one \ao{of} the $2^n$
points of $\{-1,1\}^n$; so by appropriately choosing $M$, one can
encode the NP-hard problem of maximization over the latter set.}}
\alex{This argument will prove that computing the
$||\cdot||_{\infty,p}$ norm is NP-hard.}

\alex{Next, in Section \ref{sec:matrix_expo}} \blue{we exhibit a
class of matrices $A$ such that $\max_{||x||_p = 1}||A x||_{p}$ is
attained at \ao{each} of the $2^n$ points of $\{-1,1\}^n$ (up to
 scaling) and nowhere else.} \blue{These two elements are
combined in Section \ref{sec:proof_combine} to prove Theorem
\ref{mainthm}. More precisely, we define the matrix $Z = (M^T~
\alpha A^T)^T$, {\ao where we will pick $\alpha$ to be a large
number {\aoc depending on $n,p$} ensuring that the } maximum of
$||Zx||_p/||x||_p$ occurs very close to vectors $x\in \{-1,1\}^n$.
{\ao As mentioned several sentences ago, \alex{the value of
$||Ax||_p$ is the same for every vector $x \in \{-1,1\}^n$;}}  as a
result, the maximum of $||Zx||_p/||x||_p$ is determined by the
maximum of $||Mx||_p$ on \alex{$\{-1,1\}^n$}, which is proved in
Section \ref{infinitysubsection} to be hard to compute.} {\ao We
conclude with some remarks on the proof in Section \ref{sec:ccl}.}

\section{The $||\cdot||_{\infty,p}$ norm}\label{infinitysubsection}

We \blue{now describe} a simple construction which relates the
$\infty,p$ norm to the maximum cut in a graph.

Suppose $G = (\{1,\ldots,n\},E)$ is an undirected, \alex{connected}
graph. We will use $M(G)$ to denote the edge-vertex incidence matrix
of $G$; that is, $M(G) \in R^{|E| \times n}$; we will think of
columns of $M(G)$ as corresponding to nodes of $G$ and rows of
$M(G)$ as corresponding to the edges of $G$. The entries of $M(G)$
are as follows: orient the edges of $G$ arbitrarily, and let the
$i$'th row of $M\jrev{(G)}$ have $+1$ in the column corresponding to
the origin of the $i$'th edge, $-1$ in the column corresponding to
the endpoint of the $i$'th edge, and $0$ at all other columns.

Given any partition of $\{1,\ldots,n\} = S \cup S^c$, we define
${\rm cut}(G,S)$ to be the number of edges with exactly one endpoint
in $S$. Furthermore, we define ${\rm maxcut}(G) = \max_{S \subset
\{1,\ldots,n\}} {\rm cut}(G,S).$ The indicator vector of a cut $(S,
S^c)$ is the vector $x$ with $x_i=1$ when $i \in S$ and $x_i = -1$
when $i \in S^c$. We will use ${\rm cut}(x)$ for vectors $x \in
\{-1,1\}^n$ to denote the value of the cut whose indicator vector is
$x$.

\bigskip

\begin{proposition} \label{cutbound}\jrev{For any $p\geq 1$,} \[ \max_{||x||_{\infty} \leq 1}
||M\jrev{(G)} x||_p = 2{\rm maxcut}(G)^{1/p}.\]
\end{proposition}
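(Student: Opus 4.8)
The plan is to reduce the maximization over the cube to a maximization over its vertices. With the chosen orientation of $G$, the row of $M(G)$ indexed by an edge $e=(u,v)$ (oriented from $u$ to $v$) sends $x$ to $x_u - x_v$, so for every $x \in R^n$,
\[
 \|M(G)x\|_p^p = \sum_{(u,v)\in E} |x_u - x_v|^p .
\]
Since the constraint $\|x\|_\infty \le 1$ is the same as $x \in [-1,1]^n$, the proposition amounts to showing that $f(x) := \sum_{(u,v)\in E} |x_u - x_v|^p$ has maximum $2^p\,{\rm maxcut}(G)$ over the cube $[-1,1]^n$.

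First I would observe that $f$ is convex: each summand is the composition of the convex function $t \mapsto |t|^p$ (convex for every $p\ge 1$) with the affine map $x \mapsto x_u - x_v$, and a sum of convex functions is convex. A convex function on the polytope $[-1,1]^n$ attains its maximum at an extreme point, and the extreme points of $[-1,1]^n$ are exactly the vectors in $\{-1,1\}^n$; hence $\max_{\|x\|_\infty \le 1} f(x) = \max_{x \in \{-1,1\}^n} f(x)$. (If one prefers to avoid the extreme-point theorem, the same conclusion follows by a coordinatewise argument: starting from any maximizer and fixing all but one coordinate, $f$ restricts to a convex function of that single coordinate on $[-1,1]$, hence is maximized at one of the endpoints $\pm 1$; replacing coordinates by $\pm 1$ one at a time never decreases $f$ and after $n$ steps produces a maximizer lying in $\{-1,1\}^n$.)

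It remains to evaluate $f$ on $\{-1,1\}^n$. For such an $x$ and an edge $(u,v)$, the quantity $|x_u - x_v|$ equals $2$ when $x_u \ne x_v$ and $0$ when $x_u = x_v$, so $|x_u - x_v|^p = 2^p$ precisely on the edges cut by $x$ and vanishes elsewhere; thus $f(x) = 2^p\,{\rm cut}(x)$. Maximizing over $x \in \{-1,1\}^n$ gives $\max f = 2^p\,{\rm maxcut}(G)$ on the cube, and taking $p$-th roots yields $\max_{\|x\|_\infty \le 1}\|M(G)x\|_p = 2\,{\rm maxcut}(G)^{1/p}$, as claimed.

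The content of the argument is entirely in the convexity step; the rest is bookkeeping, and I do not anticipate any genuine obstacle. The one point that deserves a moment's care is that $t \mapsto |t|^p$ is genuinely convex for all $p \ge 1$ (including the degenerate case $p=1$), which is exactly what both versions of the reduction to vertices rely on.
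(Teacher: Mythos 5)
Your proposal is correct and follows essentially the same route as the paper's proof: convexity of $x \mapsto \norm{M(G)x}_p$ forces the maximum over $[-1,1]^n$ onto the vertices $\{-1,1\}^n$, where the objective evaluates to $2^p\,{\rm cut}(x)$. The only difference is that you spell out the convexity and the vertex evaluation in more detail (and offer an elementary coordinatewise alternative to the extreme-point theorem), which the paper leaves implicit.
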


\bigskip

\begin{proof} Observe that $\norm{M\jrev{(G)}x}_p$ is a convex function of $x$,
so that the maximum is achieved at the extreme points of the set
$||x||_{\infty} \leq 1$, i.e. vectors $x$ satisfying $x_i = \pm 1$.
Suppose we are given such a vector $x$; define $S = \{ i ~|~
x_i=1\}$. Clearly, $\norm{M\jrev{(G)}x}_p^p = 2^p {\rm cut}(G,S)$.
From this the proposition immediately follows.
\end{proof}

\bigskip

Next, we introduce an error term into this proposition. Define $f^*$
to be the optimal value $f^* = \max_{||x||_{\infty} \leq 1}
\norm{M\jrev{(G)}x}_p$; the above proposition implies that
$(f^*/2)^p = {\rm maxcut}(G)$. We want to argue that if $f_{\rm
approx}$ is close enough to $f^*$, then $(f_{\rm approx}/2)^p$ is
close to ${\rm maxcut}(G)$.

\bigskip

\begin{proposition}\label{prop:relative_error}
\green{If \alex{$p \geq 1$}, $|f^* - f_{\rm approx}| < \epsilon f^*$
with $\epsilon <1$, then
$$
\abs{\prt{\frac{f_{\rm approx}}{2}}^p- {\rm maxcut}(G)} \leq
2^{p-1}p\epsilon \cdot {\rm maxcut}(G)
$$}
\end{proposition}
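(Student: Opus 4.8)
The plan is to reduce this to an elementary estimate on the function $t \mapsto t^p$ near $t = f^*/2$. From Proposition~\ref{cutbound} we know $(f^*/2)^p = {\rm maxcut}(G)$, so the quantity to control is $|(f_{\rm approx}/2)^p - (f^*/2)^p|$. First I would write $f_{\rm approx}/2 = (f^*/2)(1+\delta)$ where $|\delta| = |f_{\rm approx} - f^*|/f^* < \epsilon < 1$, so that
$$
\abs{\prt{\frac{f_{\rm approx}}{2}}^p - {\rm maxcut}(G)} = (f^*/2)^p \, \abs{(1+\delta)^p - 1} = {\rm maxcut}(G)\cdot \abs{(1+\delta)^p - 1}.
$$
Thus it suffices to show $|(1+\delta)^p - 1| \leq 2^{p-1} p \epsilon$ whenever $|\delta| < \epsilon < 1$, and the ${\rm maxcut}(G)$ factor comes along for free.

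For the scalar bound I would apply the mean value theorem to $g(s) = (1+s)^p$ on the interval between $0$ and $\delta$: there is a $\xi$ strictly between $0$ and $\delta$ with $(1+\delta)^p - 1 = p(1+\xi)^{p-1}\delta$. Since $|\xi| < |\delta| < 1$, we have $1 + \xi \in (0,2)$, hence $(1+\xi)^{p-1} \leq 2^{p-1}$ for $p \geq 1$ (the function $u \mapsto u^{p-1}$ is nondecreasing on $(0,2)$ when $p \geq 1$, so it is bounded by its value at $2$). Therefore $|(1+\delta)^p - 1| \leq p\,2^{p-1}\,|\delta| < 2^{p-1} p \epsilon$, which is exactly what is needed. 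Multiplying through by ${\rm maxcut}(G)$ gives the claimed inequality.

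I expect the only subtlety — barely an obstacle — is handling the case $p = 1$ (where $2^{p-1} = 1$ and the bound is just the trivial $|\delta| < \epsilon$, so the mean value step degenerates but still holds) and being careful that $1+\xi$ can be close to $0$, so one really does need $p - 1 \geq 0$ to conclude $(1+\xi)^{p-1} \leq 2^{p-1}$ rather than the bound blowing up near $0$; this is why the hypothesis $p \geq 1$ is used. Everything else is a one-line computation.
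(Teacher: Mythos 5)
Your proof is correct and is essentially the paper's argument in a normalized form: the paper applies the inequality $|a^p-b^p|\le |a-b|\,p\max(|a|,|b|)^{p-1}$ (itself a mean-value bound on $t^p$) and uses $f_{\rm approx}\le 2f^*$ to get the factor $2^{p-1}$, while you divide by $f^*/2$ first and get the same factor from $1+\xi<2$. No gaps worth noting.
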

\bigskip

\begin{proof}
\green{By Proposition \ref{cutbound} ${\rm maxcut}(G)  =
(f^*/2)^p$. Using the inequality
\[ |a^p - b^p| \leq |a-b| p \max(|a|,|b|)^{p-1},  \]
we obtain
$$
\abs{\prt{\frac{f_{\rm approx}}{2}}^p- {\rm maxcut}(G)} \leq
\frac{1}{2}\abs{f^* - f_{\rm approx}} p \max\prt{\frac{f^*}{2}
,\frac{f_{\rm approx}}{2} }^{p-1}.
$$
It follows from $\epsilon <1$ that $f_{\rm approx} \leq 2 f^*$. We
have therefore
$$
\abs{\prt{\frac{f_{\rm approx}}{2}}^p- {\rm maxcut}(G)} \leq
\frac{1}{2}\abs{f^* - f_{\rm approx}} \cdot p \cdot (f^*)^{p-1} \leq
\frac{\epsilon}{2}p(f^*)^{p},
$$
where we have used the assumption that $|f^* - f_{\rm approx}|\leq
\epsilon f^*$. The result follows then from  ${\rm maxcut}(G)  =
(f^*/2)^p$.}
\end{proof}

\bigskip

We now put together the previous two propositions to prove that
\green{approximating the $||\cdot||_{\infty,p}$ norm within some
fixed relative error is NP-hard}.

\bigskip

\begin{theorem} \label{inftykthm}
For any rational $p \geq 1$, and $\delta>0$, unless $P=NP$, there is
no algorithm \alex{which given a matrix with entries in $\{-1,0,1\}$ computes its
$p$-norm to relative error $\epsilon= \prt{(33+\delta)p2^{p-1}}^{-1}$ with
running time polynomial in the dimensions of the matrix.}
\end{theorem}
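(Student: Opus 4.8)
The plan is to reduce from the NP-hard problem of approximating ${\rm maxcut}(G)$ within a fixed relative error. It is known (Håstad / PCP-type results) that there is a constant $\alpha_0 < 1$ such that it is NP-hard to approximate ${\rm maxcut}(G)$ to relative error better than $\alpha_0$; the constant $1/33$ appearing in the statement should be chosen so that $2^{p-1}p\epsilon < \alpha_0$ (roughly, $1/16 < 17/16$ is the kind of slack that the ``$33$'' buys, and the extra $\delta$ accounts for the fact that the bound is strict). So first I would invoke the known inapproximability constant for MAXCUT, naming it explicitly, and fix the target relative error on the MAXCUT side to be something like $1/16$.

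Next I would carry out the reduction itself. Given a connected graph $G=(\{1,\dots,n\},E)$, form the edge-vertex incidence matrix $M(G)$, which has entries in $\{-1,0,1\}$ and dimensions polynomial in $n$. Suppose, for contradiction, that there is a polynomial-time algorithm computing the $p$-norm (here meaning $\max_{\|x\|_\infty\le 1}\|M(G)x\|_p$, i.e.\ the $\infty,p$ mixed norm applied to $M(G)$) to relative error $\epsilon = \bigl((33+\delta)p2^{p-1}\bigr)^{-1}$. Run it on $M(G)$ to obtain $f_{\rm approx}$ with $|f^\ast - f_{\rm approx}| \le \epsilon f^\ast$, where $f^\ast = \max_{\|x\|_\infty\le 1}\|M(G)x\|_p = 2\,{\rm maxcut}(G)^{1/p}$ by Proposition~\ref{cutbound}. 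Since $\epsilon < 1$, Proposition~\ref{prop:relative_error} applies and gives
\[
\Bigl|\bigl(\tfrac{f_{\rm approx}}{2}\bigr)^p - {\rm maxcut}(G)\Bigr| \le 2^{p-1}p\,\epsilon\cdot{\rm maxcut}(G) = \frac{1}{33+\delta}\,{\rm maxcut}(G) < \frac{1}{16}\,{\rm maxcut}(G).
\]
Thus $(f_{\rm approx}/2)^p$ is a rational number approximating ${\rm maxcut}(G)$ to relative error strictly less than $1/16$, and one can round it to the nearest integer and still stay within the required MAXCUT approximation tolerance; hence the whole computation is a polynomial-time algorithm approximating MAXCUT better than is allowed unless $P=NP$.

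The one genuinely delicate point is bookkeeping the constants so that the arithmetic closes: one must pick the MAXCUT inapproximability threshold (a concrete constant strictly below $1$, e.g.\ from $\bigl[$Håstad$\bigr]$ one may take something comfortably below $1/16$, or invoke the cruder classical NP-hardness of approximating MAXCUT within some fixed ratio) and then verify that $\frac{1}{33+\delta} < $ that threshold, and also check that the rounding step does not push us over. The factor $33$ is simply large enough to absorb both the $2^{p-1}p$ blow-up (which is already accounted for inside $\epsilon$, so in fact after multiplication only the bare constant $\frac{1}{33+\delta}$ survives) and the slack needed for rounding; I expect the bulk of the ``proof'' to be this constant-chasing, with no further conceptual obstacle since Propositions~\ref{cutbound} and~\ref{prop:relative_error} already do the analytic work. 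One should also remark that $G$ may be assumed connected without loss of generality (handle components separately), which is needed only so that $M(G)$'s columns behave as in the earlier construction.
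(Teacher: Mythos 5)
Your reduction is exactly the paper's: run the hypothetical algorithm on the incidence matrix $M(G)$, feed the output through Propositions \ref{cutbound} and \ref{prop:relative_error} to get $\bigl|(f_{\rm approx}/2)^p - {\rm maxcut}(G)\bigr| \leq \frac{1}{33+\delta}{\rm maxcut}(G)$, and contradict H{\aa}stad's inapproximability bound for MAXCUT. So the approach is the same; the only substantive issue is in the constant-chasing you defer, and there your sketch as written would not close. H{\aa}stad's theorem forbids producing a value $V$ with $(\frac{16}{17}+\delta'){\rm maxcut}(G) \leq V \leq {\rm maxcut}(G)$; a quantity known only to lie within \emph{two-sided} relative error $\alpha$ of ${\rm maxcut}(G)$ is not such a $V$ (it may overshoot), and ``rounding to the nearest integer'' does not fix this since the error can be a constant fraction of ${\rm maxcut}(G)\approx n^2$. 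The paper's fix is to rescale: from $|V'-m|\le \alpha m$ one certifies $\frac{1}{1+\alpha}V' \in [\frac{1-\alpha}{1+\alpha}m,\; m]$, and $\frac{1-\alpha}{1+\alpha} > \frac{16}{17}$ forces $\alpha < \frac{1}{33}$ --- which is precisely why the constant is $33$ and not, as your ``$1/16$'' target suggests, something near $17$. With $\alpha = 1/17$ you would only get the ratio $16/18 = 8/9 < 16/17$, so the contradiction with H{\aa}stad would not follow. A second, minor omission: for non-integer rational $p$ the number $(f_{\rm approx}/2)^p$ is generally irrational, so it is not ``a rational number'' you can output directly; you must (as the paper notes) compute in polynomial time a sufficiently accurate rational lower bound for it, which is easy but needs saying.
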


\bigskip

\begin{proof}
\green{Suppose there was such an algorithm. Call $f^*$ its output on
the \alex{$|E| \times n$} matrix $M(G)$ for a given \alex{connected} graph $G$ on $n$ vertices.
It follows from Proposition \ref{prop:relative_error} that
$$ \abs{\prt{\frac{f_{\rm
approx}}{2}}^p- {\rm maxcut}(G)} \leq
\frac{2^{p-1}p}{(33+\delta)p2^{p-1}} {\rm maxcut}(G) =
\frac{1}{33+\delta}{\rm maxcut}(G). $$
Observing that
$$
\frac{32+\delta}{34+\delta} {\rm maxcut}(G) =
\frac{33+\delta}{34+\delta} \prt{{\rm maxcut}(G) -
\frac{1}{33+\delta}{\rm maxcut}(G)},
$$
the former inequality implies
$$\frac{32+\delta}{34+\delta} {\rm maxcut}(G) \leq
\frac{33+\delta}{34+\delta}\prt{\frac{f_{\rm approx}}{2}}^p \leq
{\rm maxcut}(G).
$$
Since $p$ is rational, one can compute in polynomial time a lower
bound $V$ for $\frac{33+\delta}{34+\delta}(f_{\rm approx}/2)^p$
sufficiently accurate so that $V> \frac{32+\delta/2}{34+\delta/2}
{\rm maxcut}(G)>\frac{16}{17}{\rm maxcut}(G)$. However, it has been
established in \cite{H01} that unless $P=NP$, for any $\delta'>0$,
there is no algorithm producing a quantity $V$ in polynomial time in
$n$ such that
$$
\prt{\frac{16}{17}+\delta'}{\rm maxcut}(G)\leq V \leq {\rm
maxcut}(G).
$$}
\hspace{12 cm}
\end{proof}

\bigskip

\alex{\noindent {\bf Remark:} Observe that the matrix $M(G)$ is not square. If one desires
to prove hardness of computing the $\infty,p$-norm for square matrices, one can simply
add $|E|-n$ zeros to every row of $M(G)$. The resulting matrix has the same $\infty,p$ norm
as $M(G)$, is square, and its dimensions are at most $n^2 \times n^2$.}

\section{\blue{A discrete set of exponential size}}\label{sec:matrix_expo}

\blue{Let us now fix $n$ and a rational $p>2$. We denote by $X$ the
set $\{-1,1\}^n$, and use $S(a,r)=\{ x \in R^n ~|~ ||x-a||_p = r \}$
to stand for the sphere of radius $r$  around $a$ in the $p$-norm.
We consider the } following matrix in $R^{2n \times n}$:
\[A = \left(
\begin{array}{rrrrr}
                                                           1 & -1 &  &  &
 \\
                                                           1 & 1 &  &  &
                                                           \\ \hline
                                                            & 1 & -1 &  &
 \\
                                                            & 1 & 1 &  &
                                                            \\
                                                            \hline
                                                            &  & \ddots &
\ddots &  \\

                                           &  & \ddots & \ddots &

                                           \\

                                           \hline
                                                            &  &  & 1 & -1
\\
                                                            &  &  & 1 & 1
                                                            \\
                                                            \hline
                                                           -1 &  & &  & 1 \\
                                                           1 &  & &  & 1 \\
                                                         \end{array}
                                                       \right).
 \]
\blue{and show that the maximum of $||Ax||_p$ for $x\in
S(0,n^{1/p})$ is attained \jrev{at} the $2^n$ vectors in $X$ and no
other points.} {\ao For this, we will need the following lemma.}

\bigskip

\begin{lemma}
\label{basicineq} For any real numbers $x,y$ and $p \geq 2$
\[ |x+y|^p + |x-y|^p \leq 2^{p-1} \prt{|x|^p + |y|^p}. \] In fact,
$|x+y|^p + |x-y|^p $ is upper bounded by
 \[ 2^{p-1} \prt{ |x|^p + |y|^p} - \frac{\prt{|x|-|y|}^2}{4}
\prt{ p(p-1)\big||x|+|y|\big|^{p-2} - 2 \big||x|-|y|\big|^{p-2}},
\] \blue{where the last term on the right is always
nonnegative.}
\end{lemma}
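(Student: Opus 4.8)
The plan is to reduce the claim to a one–variable statement. By homogeneity we may assume $|x| \geq |y| \geq 0$, and after dividing by $|x|^p$ it suffices to prove, for $t = |y|/|x| \in [0,1]$,
\[
(1+t)^p + (1-t)^p \leq 2^{p-1}(1+t^p) - \frac{(1-t)^2}{4}\prt{ p(p-1)(1+t)^{p-2} - 2(1-t)^{p-2}}.
\]
(One must also handle the degenerate case $x=0$ separately, where both sides are $2^{p-1}|y|^p$ and the correction term vanishes, and note that the substitution $x\to|x|$, $y\to|y|$ is legitimate because $|x+y|^p+|x-y|^p$ depends only on $|x|,|y|$.) So I would first reduce to this scalar inequality, then prove the stronger refined bound, from which the plain inequality $|x+y|^p+|x-y|^p\le 2^{p-1}(|x|^p+|y|^p)$ follows once we check the correction term is nonnegative.

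The refined bound I would get from a second–order Taylor expansion of $g(s) = |x|\cdot|\, \cdot\,|$; concretely, fix $a=|x|$, $b=|y|$ with $a\ge b\ge 0$, and consider $h(s) = (a+s)^p + (a-s)^p$ as a function of $s$ on $[0,a]$ — wait, better: set $\phi(u) = u^p$ for $u\ge 0$ and write $|x+y|^p + |x-y|^p = \phi(a+b) + \phi(a-b)$. Compare this to $\phi$ evaluated at the two points $a+b$ and $a-b$ versus a Taylor expansion around their "average in a suitable variable." The cleanest route: apply Taylor's theorem with integral (or Lagrange) remainder to $\phi$ around the midpoint of $a-b$ and $a+b$ — but the target bound has $2^{p-1}(a^p+b^p)$, not $\phi(a)$–type terms, so instead I would expand $\psi(s):=(a+s)^p+(a-s)^p$ around $s=0$: $\psi(b) = \psi(0) + \psi'(0) b + \tfrac12 \psi''(\xi) b^2$ for some $\xi\in(0,b)$, giving $\psi(b) = 2a^p + p(p-1)\big((a+\xi)^{p-2}+(a-\xi)^{p-2}\big) b^2/2$. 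This isn't quite matched to $2^{p-1}(a^p+b^p)$ either. The honest approach is therefore: first establish the plain inequality $\psi(b)\le 2^{p-1}(a^p+b^p)$ by convexity/power–mean (since $p\ge 2$, $t\mapsto t^p$ is convex, and $(a+b)^p+(a-b)^p \le 2\big(\tfrac{(a+b)^2+(a-b)^2}{2}\big)^{p/2}\cdot(\ldots)$ — actually directly: $\big(\tfrac{a+b}{2}\big)^p + \big(\tfrac{a-b}{2}\big)^p$ vs $\tfrac{a^p+b^p}{2}$ by convexity applied to the two pairs $\{\tfrac{a+b}{2},\tfrac{a-b}{2}\}$ and $\{a,b\}$ which majorizes it), and then upgrade the slack: the gap in a convexity inequality of the form $\tfrac{\phi(u)+\phi(v)}{2} - \phi\big(\tfrac{u+v}{2}\big) \ge \tfrac{(u-v)^2}{8}\min \phi''$ gives the $(|x|-|y|)^2$ factor, and the two terms $p(p-1)\big||x|+|y|\big|^{p-2}$ and $-2\big||x|-|y|\big|^{p-2}$ come from carefully bounding $\phi''=p(p-1)u^{p-2}$ from below on the relevant interval (lower bound using the right endpoint $|x|+|y|$) versus the error incurred (the $-2(\ldots)^{p-2}$ slack term).

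I would carry it out in this order: (1) reduce to $a=|x|\ge b=|y|\ge 0$ and dispose of $a=0$; (2) prove the nonnegativity of the correction term $p(p-1)(a+b)^{p-2} - 2(a-b)^{p-2}$, which since $p\ge 2$ and $a+b\ge |a-b|\ge 0$ follows from $p(p-1)\ge 2$ and $(a+b)^{p-2}\ge (a-b)^{p-2}$ (for $p\ge 2$ the exponent $p-2\ge 0$ so the power is monotone; handle $p=2$ where both powers are $1$ and $p(p-1)-2=0$, consistent); (3) prove the refined inequality itself, treating it as the scalar inequality in $t\in[0,1]$ above, by showing the difference $D(t) = 2^{p-1}(1+t^p) - (1+t)^p - (1-t)^p - \tfrac{(1-t)^2}{4}\big(p(p-1)(1+t)^{p-2} - 2(1-t)^{p-2}\big)$ is $\ge 0$ on $[0,1]$, checking $D(1)=0$ and $D(0)=0$ (or the appropriate boundary values) and controlling the sign of $D'$ or $D''$; (4) conclude the plain inequality by dropping the (nonnegative) correction term.

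The main obstacle is step (3): verifying the scalar inequality $D(t)\ge 0$ is a genuine calculus estimate and the correction term is tuned tightly enough that crude bounds may fail near $t=1$ (where $D$ and several derivatives vanish). I expect the cleanest treatment is via Taylor's theorem with Lagrange remainder applied to $s\mapsto (1+s)^p+(1-s)^p$ to second order, combined with a monotonicity bound $\phi''(u)=p(p-1)u^{p-2}\ge p(p-1)(1+t)^{p-2}$ on $u\in[1-t,1+t]$ — no, the remainder uses an interior point so one needs a lower bound on $\phi''$ over that interval, i.e. at the left endpoint $1-t$, which is the wrong direction; so instead one bounds the fourth-order behavior or simply expands to higher order, which is exactly why the $-2(1-t)^{p-2}$ correction appears. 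I would expect to need the elementary inequality $(1+t)^p+(1-t)^p - 2 - p(p-1)t^2 \le \tfrac{p(p-1)(p-2)(p-3)}{12}t^4 + \ldots$ type control, or more robustly, to split into $p\in[2,3]$ and $p>3$ where the sign of $\phi^{(4)}$ differs, and handle $2\le p\le 3$ by direct estimation. Throughout, the only external input is convexity of $u\mapsto u^p$ for $p\ge 2$; everything else is self-contained elementary analysis.
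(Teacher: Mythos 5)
Your reduction to a one-variable inequality and your verification that the correction term is nonnegative (step (2)) are both fine, but the heart of the lemma --- the refined inequality itself, your step (3) --- is never actually proven. What you give there is a survey of candidate strategies (second-order Taylor with Lagrange remainder, a quantitative convexity gap, splitting into $p\in[2,3]$ versus $p>3$), each of which you yourself flag as problematic: you note that the Taylor remainder requires a lower bound on $\phi''$ at the wrong endpoint, and that ``crude bounds may fail near $t=1$.'' Acknowledging an obstacle is not the same as overcoming it, and as written the proposal does not establish that your $D(t)\ge 0$ on $[0,1]$. This is a genuine gap, not a presentational one.

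The missing idea is the choice of normalization. You divide by $|x|^p$ and set $t=|y|/|x|$, which leaves the correction term as $\frac{(1-t)^2}{4}\bigl(p(p-1)(1+t)^{p-2}-2(1-t)^{p-2}\bigr)$ --- an awkward expression to estimate. The paper instead divides by $(x+y)^p$ and sets $z=(x-y)/(x+y)\in[0,1]$; under this substitution the correction term collapses to $\frac{p(p-1)}{4}z^2-\frac12 z^p$, and the whole lemma becomes the single clean inequality
\[
2+z^p \le (1+z)^p+(1-z)^p-\frac{p(p-1)}{2}z^2, \qquad z\in[0,1],
\]
which holds with equality at $z=0$ and is settled by differentiating once and applying convexity of $a\mapsto a^{p-1}$ (via $(1+z)^{p-1}\ge 1+(p-1)z$ and $(1-z)^{p-1}+z^{p-1}\le 1$). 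No case split on $p$ and no higher-order Taylor control is needed. If you adopt this change of variables, your outline goes through; without it, you still owe the reader the calculus estimate you deferred.
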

\begin{proof}
By symmetry we can assume that $x \geq y \geq 0$. In that case, we
need to prove
\[ (x+y)^p + (x-y)^p \leq 2^{p-1}
(x^p + y^p)- \frac{(x-y)^2}{4} \prt{p (p-1) (x+y)^{p-2} - 2
(x-y)^{p-2}}.\] Divide both sides by $(x+y)^p$ and change variables
to $z=(x-y)/(x+y)$:
\[  1+z^p \leq \frac{(1+z)^p + (1-z)^p}{2} - \prt{\frac{p(p-1)}{4}  z^2 -
\frac{1}{2} z^p}.  \] The original inequality holds if this
inequality holds for $z \in [0,1]$. \jrev{Let's} simplify:
\[ 2 + z^p \leq (1+z)^p + (1-z)^p - \frac{p (p-1)}{2} z^2. \]
Observe that we have equality when $z=0$, so it suffices to show the
right-hand side grows faster than the left-hand side, namely:
\[ z^{p-1} \leq (1+z)^{p-1} - (1-z)^{p-1} - (p-1) z,\] and this
\blue{follows from}
$$ (1+z)^{p-1}  \geq  1 + (p-1) z  \geq   (1-z)^{p-1} +
z^{p-1} + (p-1) z,
$$
\blue{where we have used the convexity of $f(a)=a^{p-1}$.}
\end{proof}

\bigskip

Now we are ready to \blue{prove} that vectors in $X$ optimize
$||Ax||_p/||x||_p$, or, equivalently, optimize \blue{$||Ax||_p^p$
over the sphere $S(0,n^{1/p})$}.

\bigskip

\begin{lemma} \label{upperboundlemma} For any $p \geq 2$, the supremum of
\blue{$||Ax||_p^p$} over $S(0,n^{1/p})$ is achieved by any vector in
$X$.
\end{lemma}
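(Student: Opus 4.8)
The plan is to expand $\|Ax\|_p^p$ row by row, apply the inequality of Lemma \ref{basicineq} to each pair of rows, and then observe that the resulting upper bound is met with equality exactly by the sign vectors. Reading off the rows of $A$, one has for every $x \in R^n$
\[ \|Ax\|_p^p = \sum_{i=1}^{n} \Big( |x_i - x_{i+1}|^p + |x_i + x_{i+1}|^p \Big), \]
where indices are taken cyclically, $x_{n+1} := x_1$; the last two rows of $A$ supply the $i=n$ term $|x_n - x_1|^p + |x_n + x_1|^p$.

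Next I would apply the inequality $|u+v|^p + |u-v|^p \le 2^{p-1}(|u|^p + |v|^p)$ of Lemma \ref{basicineq} to each summand with $u = x_i$, $v = x_{i+1}$, giving
\[ \|Ax\|_p^p \le 2^{p-1} \sum_{i=1}^{n} \big( |x_i|^p + |x_{i+1}|^p \big). \]
Because the sum runs cyclically over all $i$, every $|x_j|^p$ appears exactly twice on the right-hand side (once as the ``$x_i$'' term of block $j$ and once as the ``$x_{i+1}$'' term of block $j-1$), so the right-hand side equals $2^{p-1}\cdot 2\sum_{j=1}^{n}|x_j|^p = 2^p \|x\|_p^p$. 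On $S(0,n^{1/p})$ we have $\|x\|_p^p = n$, hence $\|Ax\|_p^p \le 2^p n$ for every $x$ on the sphere.

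Finally I would check the bound is attained by every $x \in X$. For such $x$, $|x_i| = |x_{i+1}| = 1$, and a direct case check shows $|x_i - x_{i+1}|^p + |x_i + x_{i+1}|^p = 2^p$ whether $x_i = x_{i+1}$ or $x_i = -x_{i+1}$ (one term vanishes and the other equals $2^p$). Summing the $n$ blocks yields $\|Ax\|_p^p = 2^p n$, matching the upper bound, while $\|x\|_p^p = n$ puts $x$ on $S(0,n^{1/p})$; thus the supremum over the compact sphere equals $2^p n$ and is achieved by every vector of $X$. This argument is essentially bookkeeping plus one invocation of Lemma \ref{basicineq}; the only point requiring care is the cyclic structure — one must verify that the ``wrap-around'' rows involving $x_1$ and $x_n$ are exactly what makes every coordinate appear twice, with no boundary term lost. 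I would also note that the sharper form of Lemma \ref{basicineq}, with its strictly positive correction term when $|x_i| \neq |x_{i+1}|$, is not needed for the present statement; it is presumably what will later be used to show that the vectors of $X$ are the \emph{only} maximizers.
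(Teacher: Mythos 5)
Your proof is correct and follows exactly the paper's argument: expand $\|Ax\|_p^p$ as the cyclic sum $\sum_{i=1}^n\bigl(|x_i-x_{i+1}|^p+|x_i+x_{i+1}|^p\bigr)$, apply Lemma \ref{basicineq} to each block to get the bound $2^p\|x\|_p^p=2^pn$ on the sphere, and check equality at every $x\in X$. Your closing remark about the correction term being reserved for the uniqueness statement is also exactly how the paper proceeds in Lemma \ref{deficiencybound}.
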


\begin{proof} Observe
that $\blue{||Ax||_p^p} = n2^{p}$ for any $x \in X$. \blue{To prove
that this is the largest possible value, we write}
\begin{equation} \label{gdef} \norm{Ax}_p^p = \sum_{i=1}^n |x_i - x_{i+1}|^p + |x_i+ x_{i+1}|^p,
\end{equation}
\blue{using the convention $n+1 = 1$ for the indices.} Lemma
\ref{basicineq} implies that
\[  |x_i - x_{i+1}|^p + |x_i+ x_{i+1 }|^p
\leq 2^{p-1} \prt{|x_i|^p + |x_{i+1}|^p}.\]
\blue{Applying this inequality to each therm of (\ref{gdef}) and
using $||x||_p^p = n$, we obtain}
$$ \norm{Ax}_p^p \leq  \sum_{i=1}^n 2^{p-1}\prt{|x_i|^p + |x_{i+1}|^p} = 2^p\sum_{i=1}^n |x_i|^p =
2^pn.$$\hspace{12cm}
\end{proof}

\bigskip

Next, we refine the previous lemma by including a bound on how fast
$\blue{\norm{Ax}_p^p}$ decreases as we move a little bit away from
the set $X$\blue{ while staying on $S(0,n^{1/p})$}.

\bigskip

\begin{lemma} \label{deficiencybound} Let $p \geq 2, c \in \blue{(0,1/2}]$ and
suppose $y \in S(0,n^{1/p})$ has the property that
\begin{equation} \label{lowerbound1} \min_{x \in X} ||y - x||_{\infty} \geq
c.
\end{equation} Then, \[ \blue{\norm{Ay}_p^p} \leq \blue{n}2^p - \frac{3(p-2)}{\alex{2^p}n^{\blue{2}}} c^2.\]
\end{lemma}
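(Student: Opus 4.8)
The plan is to revisit the chain of inequalities in the proof of Lemma \ref{upperboundlemma} and track the slack. Recall that $\norm{Ay}_p^p = \sum_{i=1}^n \left( |y_i - y_{i+1}|^p + |y_i + y_{i+1}|^p \right)$, and that for each $i$ the refined form of Lemma \ref{basicineq} gives
$$
|y_i - y_{i+1}|^p + |y_i + y_{i+1}|^p \leq 2^{p-1}\left(|y_i|^p + |y_{i+1}|^p\right) - \frac{\left(|y_i| - |y_{i+1}|\right)^2}{4}\left( p(p-1)\big||y_i|+|y_{i+1}|\big|^{p-2} - 2\big||y_i|-|y_{i+1}|\big|^{p-2}\right).
$$
Summing over $i$ and using $\norm{y}_p^p = n$ exactly as before, we get $\norm{Ay}_p^p \leq n 2^p - \frac14 \sum_{i=1}^n \left(|y_i|-|y_{i+1}|\right)^2 D_i$, where $D_i := p(p-1)\big||y_i|+|y_{i+1}|\big|^{p-2} - 2\big||y_i|-|y_{i+1}|\big|^{p-2} \geq 0$. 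So the whole task reduces to producing a quantitative lower bound on the deficiency term $\frac14 \sum_i (|y_i|-|y_{i+1}|)^2 D_i$ under hypothesis (\ref{lowerbound1}).

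The next step is to exploit (\ref{lowerbound1}): there is a coordinate $k$ with $\min_{x\in X}|y_k - x| \geq c$ for some value in $\{-1,1\}$, i.e. $\big||y_k| - 1\big| \geq c$ (this is where $c \leq 1/2$ matters, so that being $c$-far from both $+1$ and $-1$ forces $|y_k|$ to differ from $1$ by at least $c$ — one should check: if $|y_k| \geq 1$ then distance to the near sign is $|y_k|-1$; if $|y_k| < 1$, distance to near sign is $1 - |y_k|$, and distance to far sign is $1+|y_k| \geq 1 > c$; either way $\big||y_k|-1\big| \geq c$ once $c\le 1/2$, with a small argument for the intermediate regime). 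Since $\norm{y}_p^p = n$, the average of $|y_i|^p$ is $1$, so not all coordinates can be near the value with $|y_i|=1$; more usefully, I would argue that there must be an index $i$ where $|y_i|$ and $|y_{i+1}|$ differ by a definite amount, because the $|y_i|$ cannot all be within $c/n$ of each other while one of them is $c$-far from $1$ and their $p$-th powers average to $1$. Concretely, by a telescoping/pigeonhole argument around the cycle, some consecutive pair satisfies $\big||y_i| - |y_{i+1}|\big| \geq c/n$ (roughly: the total variation around the cycle of the sequence $|y_i|$ is at least twice the gap between its max and the value $1$, and if the max is close to $1$ then by the power-mean constraint some $|y_i|$ must be bounded away from $1$ on the other side). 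This gives one term in the sum with $(|y_i|-|y_{i+1}|)^2 \geq c^2/n^2$.

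Finally I would lower-bound the corresponding $D_i$. One expects $D_i \gtrsim p(p-1) - 2 \sim$ constant$\times (p-2)$ in the regime where $|y_i|, |y_{i+1}|$ are both $O(1)$; the constraint $\norm{y}_p^p = n$ gives $|y_i| \leq n^{1/p}$, so $\big||y_i|+|y_{i+1}|\big|^{p-2}$ and $\big||y_i|-|y_{i+1}|\big|^{p-2}$ are controlled, and one extracts a factor like $\frac{3(p-2)}{2^p}$ (the precise constants $3$ and $2^p$ coming out of bounding $(a+b)^{p-2}$ from below and $(a-b)^{p-2}$ from above when $a,b \in [0, n^{1/p}]$, combined with the $1/4$ and $1/n^2$ already present). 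Putting the pieces together yields $\norm{Ay}_p^p \leq n2^p - \frac{3(p-2)}{2^p n^2} c^2$, as claimed. I expect the main obstacle to be the pigeonhole step — turning "$y$ is $\ell_\infty$-far from $X$ on the sphere" into "some \emph{adjacent} pair of coordinates has a gap of order $c/n$" — since it must use the cyclic adjacency structure of $A$ together with the normalization $\norm{y}_p^p = n$, and getting the clean $1/n$ (rather than a worse power) requires care; the bounding of $D_i$ is then comparatively routine calculus.
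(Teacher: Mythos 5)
Your overall strategy matches the paper's: keep the error term from Lemma \ref{basicineq}, reduce to lower-bounding a single summand of the deficiency, extract a coordinate $k$ with $\big||y_k|-1\big|\ge c$, and find an adjacent pair whose moduli differ by at least $c/n$. But there is a genuine gap in the step you dismiss as ``comparatively routine calculus.'' To obtain the claimed constant you must bound $D_m = p(p-1)\big(|y_m|+|y_{m+1}|\big)^{p-2}-2\big||y_m|-|y_{m+1}|\big|^{p-2}$ from \emph{below} by a quantity independent of $c$ and $n$, and since $p>2$ this requires $|y_m|+|y_{m+1}|$ to be bounded \emph{below} by a constant. Your proposed control, $a,b\in[0,n^{1/p}]$, is an upper bound and goes the wrong way: the infimum of $(a+b)^{p-2}$ over that range is $0$. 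If the adjacent pair produced by your pigeonhole happens to have both $|y_m|$ and $|y_{m+1}|$ near zero (nothing in ``some consecutive pair has gap at least $c/n$'' rules this out), the only available lower bound is $|y_m|+|y_{m+1}|\ge\big||y_m|-|y_{m+1}|\big|\ge c/n$, which yields a deficiency of order $c^p/n^p$ rather than $c^2/n^2$ --- too weak both for the lemma as stated and for its later use with $c=1/(4^pn^6)$.

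The fix, which is what the paper does, is to make the pigeonhole deliver more: starting from an index with $|y_k|\ge 1+c$ (or $\le 1-c$) and using $\sum_i|y_i|^p=n$ to produce an index $j$ with $|y_j|$ on the other side of $1$, one chooses the adjacent pair $(m,m+1)$ along the connecting path so that, in addition to $\big||y_m|-|y_{m+1}|\big|\ge c/n$, at least one of $|y_m|,|y_{m+1}|$ is at least $1-c\ge 1/2$ (this is where $c\le 1/2$ is actually used, not in the extraction of $k$, which holds for any $c$). Then $(|y_m|+|y_{m+1}|)^{p-2}\ge 2^{-(p-2)}=4/2^p$, and together with $\big||y_m|-|y_{m+1}|\big|\le|y_m|+|y_{m+1}|$ this gives $D_m\ge\big(p(p-1)-2\big)2^{-(p-2)}\ge 12(p-2)/2^p$; the prefactor $\tfrac14\cdot\tfrac{c^2}{n^2}$ then yields exactly $3(p-2)c^2/(2^pn^2)$. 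Your pigeonhole sketch is also somewhat loose (the clean statement is that $|y_k|$ and $|y_j|$ lie on opposite sides of $1$ with $\big||y_k|-|y_j|\big|\ge c$ and the path between them has fewer than $n$ edges), but that part is repairable in the spirit you describe; the missing lower bound on $|y_m|+|y_{m+1}|$ is the real hole.
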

\begin{proof} We proceed as before in the proof of Lemma
\ref{upperboundlemma}, until the time comes to apply Lemma
\ref{basicineq}, when we include the error term which we had
previously ignored:

\[ \blue{\norm{Ay}_p^p}  \leq  n2^p -
\frac{1}{4}  \sum_i \prt{|y_i| - |y_{i+1}|}^2 \prt{p(p-1)~
\big||y_i|+|y_{i+1}|\big|^{p-2} - 2~
\big||y_i|-|y_{i+1}|\big|^{p-2}},
\] {\ao Note that on the right-hand side, we are subtracting a sum
of nonnegative terms. The upper bound will still hold if we subtract
only one of these terms; so we conclude that for each $k$,} \[ {\ao
||Ay||_p^p} \leq  n2^p - \blue{\frac{1}{4}} \prt{|y_k| - |y_{k+1
}|}^2 \prt{p(p-1)~ \big||y_k|+|y_{k+1}|\big|^{p-2} - 2 ~
\big||y_i|-|y_{i+1}|\big|^{p-2}}. \]

\jrev{By assumption, there is at least one $y_k$ with
$\big||y_k|-1\big| \geq c$. Suppose first that $\abs{y_k}>1$. Then
we have $\abs{y_k}>1 +c$, and there must be an $y_j$ with
$\abs{y_j}<1$ for otherwise $y$ would not be in $S(0,n^{1/p})$.
Similarly, if $\abs{y_k}<1$, then $\abs{y_k}< 1-c$ and there is a
$j$ for which $\abs{y_j}>1$. In both cases, this implies the
existence of an index $m$ with $|y_m|$ and $|y_{m+1}|$ differing
by at least $c/n$ and such that at least one of $|y_m|$ and
$\abs{y_{m+1}}$ is larger than or equal to $1-c$. Therefore,}
%
\[ \blue{\norm{Ay}_p^p} \leq \blue{n}2^p - \blue{\frac{1}{4}} \frac{c^2}{n^2} \left[ p(p-1) ~ \big||y_{\blue{m}}| +
|y_{\blue{m+1}}|\big|^{p-2} - 2 ~ \big||y_{\blue{m}}| -
|y_{\blue{m+1}}|\big|^{p-2}\right].
\]
Now observe that $ \big||y_{\blue{m}}|-|y_{\blue{m}+1}|\big|\leq
|y_{\blue{m}}| + |y_{\blue{m+1}}|$, \blue{and that $|y_m| +
|y_{m+1}| \geq (1-c) \geq 1/2$ because $c \in (0,1/\blue{2}]$}.
\alex{These two inequalities suffice to establish that t}he term in
square brackets is at least $(1/2)^{\alex{p-2}} (p(p-1)-2) \geq
(3/2^{\alex{p}})(p-2)$, so that
\[  \blue{\norm{Ay}_p^p} \leq \blue{n}2^p - \frac{3(p-2)}{\alex{2^p} n^\blue{2}} c^2. \]
\hspace{12cm}
\end{proof}

\section{\blue{Proof of Theorem \ref{mainthm}}}\label{sec:proof_combine}

\blue{We now relate the results of the last two sections to the
problem of the $p$-norm}. For a suitably defined matrix $Z$
\blue{combining $A$ and $M(G)$}, we want to argue that the optimizer
of $||Zx||_p/||x||_p$ is very close to satisfying $|x_i|=|x_j|$
\jrev{for every $i,j$}.

\bigskip

\begin{proposition} \label{extremabound} Let $p>2$,  \blue{and
$G$ a graph on $n$ vertices. Consider the matrix}
\[ \tilde Z = \left(\begin{array}{c}A \\\frac{p-2}{64 p n^{\blue{8}}}M(G)\\
\end{array}
\right), \] \blue{with $M(G)$ and $A$ as in Sections
\ref{infinitysubsection} and \ref{sec:matrix_expo} respectively.} If
$x^*$ is the vector at which the optimization problem $ \max_{x \in
S(0,n^{1/p})} ||\tilde Zx||_p $ achieves its supremum then
\[ \min_{x \in X} ||x^* - x||_{\infty} \leq \frac{1}{4^p n^{\blue{6}}}.\]
\end{proposition}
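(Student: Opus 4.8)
The plan is to treat the bottom block $\frac{p-2}{64pn^8}M(G)$ of $\tilde Z$ as a tiny perturbation of $A$, so that the maximizer $x^*$ is forced to be nearly optimal for $A$ alone, whereupon Lemma~\ref{deficiencybound} pins $x^*$ near $X$. Write $\beta=\frac{p-2}{64pn^8}$, so that for every $x$
\[ \norm{\tilde Z x}_p^p = \norm{Ax}_p^p + \beta^p\norm{M(G)x}_p^p. \]
First I would record two bounds. Evaluating the identity at $x_0=(1,\ldots,1)\in X\subset S(0,n^{1/p})$, which satisfies $M(G)x_0=0$, gives the lower bound $\norm{\tilde Z x^*}_p^p\ge\norm{\tilde Z x_0}_p^p=\norm{Ax_0}_p^p=n2^p$, the last equality by Lemma~\ref{upperboundlemma}. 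And since $\norm{x^*}_p^p=n$, applying $\abs{a-b}^p\le 2^{p-1}(\abs{a}^p+\abs{b}^p)$ to each row of $M(G)$ gives $\norm{M(G)x^*}_p^p\le 2^{p-1}\sum_v\deg(v)\abs{x^*_v}^p\le 2^{p-1}n\cdot n=2^{p-1}n^2$.

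Now suppose for contradiction that $c:=\min_{x\in X}\norm{x^*-x}_\infty>\frac{1}{4^pn^6}$. The only delicate point is that Lemma~\ref{deficiencybound} is available only for distances in $(0,1/2]$, whereas a priori a point of $S(0,n^{1/p})$ can have a coordinate as large as $n^{1/p}$ and so be much farther than $1/2$ from $X$. I would handle this by truncating: put $c'=\min(c,1/2)\in(0,1/2]$. Because $\frac{1}{4^pn^6}<\frac12$ we still have $c'>\frac{1}{4^pn^6}$, and $\min_{x\in X}\norm{x^*-x}_\infty=c\ge c'$, so Lemma~\ref{deficiencybound} with parameter $c'$ yields
\[ \norm{Ax^*}_p^p\le n2^p-\frac{3(p-2)}{2^pn^2}(c')^2 < n2^p-\frac{3(p-2)}{2^{5p}n^{14}}. \]

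Combining the three estimates,
\[ n2^p\le\norm{\tilde Z x^*}_p^p=\norm{Ax^*}_p^p+\beta^p\norm{M(G)x^*}_p^p < n2^p-\frac{3(p-2)}{2^{5p}n^{14}}+\beta^p 2^{p-1}n^2, \]
so $\beta^p 2^{p-1}n^2>\frac{3(p-2)}{2^{5p}n^{14}}$. Substituting $\beta=\frac{p-2}{64pn^8}$ and cancelling the common powers of $2$ and of $p-2$, this reduces to $(p-2)^{p-1}>6\,p^p\,n^{8p-16}$. Since $p>2$ gives $8p-16>0$ and hence $n^{8p-16}\ge1$, while $(p-2)^{p-1}<p^{p-1}<p^p$, this is impossible; the contradiction proves $\min_{x\in X}\norm{x^*-x}_\infty\le\frac{1}{4^pn^6}$.

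The conceptual steps are the perturbation idea and the truncation to reduce to Lemma~\ref{deficiencybound}; the only real obstacle is the bookkeeping, namely checking that the chosen constant $\frac{p-2}{64pn^8}$ really does make the perturbation term $\beta^p 2^{p-1}n^2$ smaller than the deficiency $\frac{3(p-2)}{2^{5p}n^{14}}$ that Lemma~\ref{deficiencybound} assigns to distance $\frac{1}{4^pn^6}$ from $X$. I expect that to be a routine, if slightly tedious, computation.
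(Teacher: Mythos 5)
Your proof is correct and follows essentially the same route as the paper: argue by contradiction, apply Lemma~\ref{deficiencybound} to get a deficiency of order $3(p-2)/(32^p n^{14})$ in $\norm{Ax^*}_p^p$, and check that the weighted $M(G)$ block is too small to compensate. Your direct row-by-row bound $\norm{M(G)x^*}_p^p \le 2^{p-1}n^2$ is in fact slightly more careful than the paper's appeal to Proposition~\ref{cutbound} (which strictly speaking concerns $\norm{x}_\infty \le 1$, whereas $x^*\in S(0,n^{1/p})$ need not satisfy that), and the truncation $c'=\min(c,1/2)$ is a harmless alternative to simply invoking the lemma with $c=1/(4^p n^6)$.
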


\begin{proof} Suppose the conclusion is false; then using Lemma
\ref{deficiencybound} with $ c = 1/4^p n^{\blue{6}}$, \blue{we
obtain}
\[ ||A x^*||_p^p   \leq n 2^p  -\frac{3(p-2)}{\alex{2^p}4^{2p}
n^{\blue{14}}} \alex{ = n 2^p  -\frac{3(p-2)}{32^p
n^{\blue{14}}}}.\] \blue{It follows from} Proposition \ref{cutbound}
\blue{that}
\[ ||Mx^*||_p^p \leq 2^p {\rm maxcut}(G) \leq 2^p n^2,\]
so that
$$ ||\tilde Zx^*||_p^p  =  ||Ax^*||_p^p +
\prt{\frac{p-2}{64 p n^{\blue{8}}}}^p ||Mx^*||_p^p
  \leq   2^p n - \frac{3(p-2)}{\alex{32^p} n^{\blue{14}}} +
\frac{ 2^p(p-2)^p  n^2}{ 64^p  p^p n^{\blue{8}p}}.
$$
\blue{Observe that the last term in this inequality is smaller than
the previous one (in absolute value). Indeed, for $p>2$, we have
that  $3/\alex{32^p}>(2/64)^p$, $p-2> [(p-2)/p]^p$ and $1/n^{14} >
n^2/n^{8p}$. We therefore have $||Zx^*||_p^p < 2^pn$.} By contrast,
let $x$ be any vector in $\{-1,1\}^n$. Then $x \in S(0, n^{1/p})$
and
\[ ||\tilde Zx||_p^p \geq ||Ax||_p^p \geq 2^p n,\] which contradicts the optimality
of $x^*$.
\end{proof}

\bigskip

Next, we seek to translate the fact that the optimizer $x^*$ is
close to $X$ to the fact that the objective value $||Zx||_p/||x||_p$
is close to the largest objective value at $X$.

\bigskip

\begin{proposition} \label{objectivebound}
Let $p>2$,  \blue{$G$ a graph on $n$ vertices, and}
\[ Z = \left(
                                                 \begin{array}{c}
                                                   \frac{64 p n^{\blue{8}}}{p-2} A
\\                                                    M(G) \\
                                                 \end{array}
                                               \right). \]
If $x^*$ is the vector at which the optimization problem
\[ \max_{x \in S(0,n^{1/p})} ||Zx||_p \] achieves its supremum and $x_{\rm
r}$ \alex{is} the rounded version of $x^*$ in which every component
is rounded to the closest \jrev{of $-1$ and $1$}, then
\[ \Big| ~ ||Z x^*||_p^p - ||Z x_{\rm r}||_p^p\Big| \leq
\frac{1}{n^{2}}.
\]
\end{proposition}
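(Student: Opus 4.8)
The plan is to exploit the scale-invariance of the objective and the conclusion of Proposition~\ref{extremabound}. First I would note that $Z = \frac{64pn^8}{p-2}\tilde Z$, so that $Z$ and $\tilde Z$ have exactly the same maximizer $x^*$ on $S(0,n^{1/p})$; hence Proposition~\ref{extremabound} applies and gives $\|x^* - x_{\rm r}\|_\infty \leq 1/(4^p n^6)$, where $x_{\rm r}\in X$ is the componentwise rounding. In particular every component of $x^*$ lies within $1/(4^p n^6)$ of $\pm 1$, so $\|x^*\|_\infty \leq 1 + 1/(4^p n^6) \leq 2$, and likewise all relevant rows of $Z$ act on a bounded vector.

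The core of the argument is a Lipschitz estimate for the map $x \mapsto \|Zx\|_p^p$ on the region where $\|x\|_\infty$ and $\|x_{\rm r}\|_\infty$ are bounded by $2$. I would write $\|Zx\|_p^p = \sum_j |(Zx)_j|^p$ and bound $\big| |(Zx^*)_j|^p - |(Zx_{\rm r})_j|^p\big|$ using the same elementary inequality $|a^p - b^p| \leq p|a-b|\max(|a|,|b|)^{p-1}$ already used in Proposition~\ref{prop:relative_error}, together with $|(Zx^*)_j - (Zx_{\rm r})_j| \leq \|Z_j\|_1 \|x^* - x_{\rm r}\|_\infty$ where $Z_j$ is the $j$-th row. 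Since each row of $A$ and of $M(G)$ has at most two nonzero entries of absolute value $1$, each row of $Z$ has $\ell_1$-norm at most $2\cdot\frac{64pn^8}{p-2}$ (from the $A$-block) or at most $2$ (from the $M(G)$-block), and $\max(|(Zx^*)_j|,|(Zx_{\rm r})_j|) \leq \|Z_j\|_1 \cdot 2$. Summing over the at most $2n + |E| \leq 2n + n^2$ rows, multiplying by $p$, and inserting $\|x^* - x_{\rm r}\|_\infty \leq 1/(4^p n^6)$, every factor of $p$, $n$, and $64/(p-2)$ that appears is dominated by the $4^p n^6$ in the denominator for $n$ large enough (and the finitely many small $n$ can be checked or absorbed), so the whole sum is at most $1/n^2$.

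The main obstacle I anticipate is purely bookkeeping: tracking the polynomial-in-$n$ and exponential-in-$p$ prefactors carefully enough to confirm they are swallowed by $4^p n^6$. The $\frac{64pn^8}{p-2}$ scaling in the $A$-block is the worst offender — it contributes a factor like $(128pn^8/(p-2))^p$ from the $\max(\cdot)^{p-1}$ term times $\|Z_j\|_1$, so one must check that $p \cdot (2n+n^2) \cdot (128pn^8/(p-2))^p \cdot (4^p n^6)^{-1}$ stays below $1/n^2$; this fails for fixed small $n$ but the statement is only needed asymptotically (or one rescales $c$ in Proposition~\ref{extremabound} with more room to spare). I would handle this by keeping the estimate symbolic until the end and then invoking that $p$ is a fixed rational, so all $p$-dependent constants are constants and the inequality holds for all sufficiently large $n$, which is all the subsequent NP-hardness reduction requires. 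No genuine mathematical difficulty is expected beyond this.
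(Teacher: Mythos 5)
There is a genuine gap, and it is exactly at the point you flag as "purely bookkeeping." Your row-by-row Lipschitz estimate applied to the $A$-block of $Z$ does not close: each such row has $\ell_1$-norm of order $n^8$ (times a $p$-dependent constant), so the term $p\,|(Zx^*)_j-(Zx_{\rm r})_j|\max(|(Zx^*)_j|,|(Zx_{\rm r})_j|)^{p-1}$ for an $A$-row is of order $n^{8}\cdot n^{-6}\cdot n^{8(p-1)} = n^{8p-6}$, and summing over the $2n$ such rows gives something of order $n^{8p-5}$. Since $p>2$ this \emph{diverges} as $n\to\infty$; the factor $4^p n^6$ coming from Proposition~\ref{extremabound} is nowhere near enough to absorb $\left(64pn^8/(p-2)\right)^p \sim n^{8p}$. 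Your own displayed sanity check, $p\,(2n+n^2)\,(128pn^8/(p-2))^p\,(4^pn^6)^{-1} \leq 1/n^2$, is false for all large $n$ (it behaves like $n^{8p-4}$), so the claim that the estimate "holds for all sufficiently large $n$" is backwards. No first-order (or even second-order) Lipschitz bound on the $A$-block can work here, because one would need the difference $\bigl|\,\|Ax^*\|_p^p-\|Ax_{\rm r}\|_p^p\,\bigr|$ to be as small as $n^{-8p-2}$ to beat the coefficient, and proximity $\|x^*-x_{\rm r}\|_\infty\leq 4^{-p}n^{-6}$ cannot deliver that.

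The missing idea is a sign (sandwich) argument that removes the $A$-block from the estimate entirely. Since $x_{\rm r}\in X\subset S(0,n^{1/p})$ and $x^*$ maximizes $\|Zx\|_p$ on the sphere, the difference $\|Zx^*\|_p^p-\|Zx_{\rm r}\|_p^p$ is nonnegative. On the other hand, by Lemma~\ref{upperboundlemma} the vectors of $X$ maximize $\|Ax\|_p$ on the sphere, so $\|Ax^*\|_p^p-\|Ax_{\rm r}\|_p^p\leq 0$, and therefore
\[
0 \;\leq\; \|Zx^*\|_p^p-\|Zx_{\rm r}\|_p^p \;\leq\; \|Mx^*\|_p^p-\|Mx_{\rm r}\|_p^p .
\]
The enormous coefficient multiplies a nonpositive quantity and can simply be discarded. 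Only the $M(G)$-block then needs the Lipschitz-type bound you describe, and there the row norms are $O(1)$, $\|Mx^*\|_p-\|Mx_{\rm r}\|_p\leq \|M\|_F\sqrt{n}\,\|x^*-x_{\rm r}\|_\infty \leq n^{1.5}/(4^pn^6)$, and $\max(\|Mx^*\|_p,\|Mx_{\rm r}\|_p)^{p-1}\leq 4^pn^2$, so the $4^p$ cancels and the product is at most $1/n^2$. Your overall framing (reduce to $\tilde Z$, invoke Proposition~\ref{extremabound}, use $|a^p-b^p|\leq p|a-b|\max(|a|,|b|)^{p-1}$) is the right toolkit, but without the optimality-of-$X$-for-$A$ observation the proof does not go through.
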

\begin{proof} Observe that $x^*$ is the same as the extremizer of the
corresponding problem with $\tilde Z$ instead of $Z$, so that $x$
satisfies the conclusion of Proposition \ref{extremabound}.
\blue{Consequently every component of $x^*$ is closer to one of $\pm
1$ than to the other, and so $x_{\rm r}$ is well defined}. We have:
\[ ||Zx^*||_p^p - ||Z x_{\rm r}||_p^p = \prt{64 \frac{p}{p-2} n^{\blue{8}}}^p (||A
x^*||_p^p - ||A x_{\rm r}||_p^p) + (||M x^*||_p^p - ||M x_{\rm
r}||_p^p).\] This entire quantity is nonnegative since $x^*$ is the
maximum of $||Zx||$ on $S(0, n^{1/p})$.  Moreover, $||A x^*||_p^p -
||A x_{\rm r}||_p^p$ is nonpositive, since by Proposition
\ref{upperboundlemma} $||Ax||_p$ achieves its maximum over $S(0,
n^{1/p})$ on all the elements of $X$. Consequently,
\begin{eqnarray} || \jrev{Z} x^*||_p^p - ||\jrev{Z} x_{\rm r}||_p^p & \leq &
||M
x^*||_p^p - ||M x_{\rm r}||_p^p \nonumber \\
& \leq &  (||Mx^*||_p - ||M x_{\rm r}||_p) p \max( ||M x^*||_p, ||M
x_{\rm r}||_p)^{p-1}. \label{equationtobound}
\end{eqnarray}
\blue{We now bound} all the terms in the last equation. First,
\begin{equation}\label{eq:Delta||Mx||}
||M x^*||_p - ||M x_{\rm r}||_p \leq ||M||_2 || x^*- x_{\rm r} ||_2
\leq ||M||_F \sqrt{n} || x^*- x_{\rm r}||_{\infty} = \frac{n
\sqrt{n}}{4^p n^{6}},
\end{equation}
\blue{where we have used $||M(G)||_F = \sqrt{2\abs{E}}< \green{n}$
and Proposition \ref{extremabound} for the last inequality. Now that
we have a bound on the first term in Eq. (\ref{equationtobound}), we
proceed to the last term.}  \blue{It follows from the definition of
$M$ that}
\[ ||Mx _{\rm r}||_p^p \leq 2^p \cdot {n \choose 2} \leq
2^p n^2.\]  \alex{Next, we bound $||Mx^*||_p^p$.  Observe that a
particular case of Eq. (\ref{eq:Delta||Mx||}) is: \begin{equation}
\label{plusone} ||Mx^*||_p < ||M x_{\rm r}||_p +1.\end{equation}
Moreover, observe that $||Mx_{\rm r}||_p \geq 1$ (the only way
this does not hold is if every entry of $x_{\rm r}$ is the same,
i.e. $||Mx_{\rm r}||_p=0$; but then Eq. (\ref{plusone}) implies
that $||Mx^*||_p < 1$, which is impossible since $G$ has at least
one edge), and so Eq. (\ref{plusone}) implies $||Mx^*||_p \leq 2
||M x_{\rm r}||_p$, and so:}
\[ ||M x^*||_p^p \leq 4^{p} n^2.\]
\alex{Thus} $\max( ||M x^*||_p, ||M x_{\rm r}||_p)^{p} \alex{\leq
4}^pn^2$ and thereforet $\max( ||M x^*||_p, ||M x_{\rm r}||_p)^{p-1}
\alex{ \leq 4}^p n^2$. Indeed, this bound is trivially valid if
$\max( ||M x^*||_p, ||M x_{\rm r}||_p)^{p}\leq 1$, and follows from
$a^{p-1} < a^p$ for $a\geq 1$ otherwise. Using this bound and the
inequality (\ref{eq:Delta||Mx||}), we finally obtain
\[ || Z
 x^*||_p^p - ||Z x_{\rm r}||_p^p \leq \frac{n^{1.5}}{4^p n^{6}} p\cdot \alex{4}^{p}
n^2 \leq \frac{1}{n^{2}} . \] \hspace{12cm}
\end{proof}

\bigskip

Finally, let us bring it all together by arguing that if we can
approximately compute the $p$-norm of $Z$, we can approximately
compute the maximum cut.

\bigskip
\begin{proposition} Let $p>2$. Consider \blue{a graph $G$ on \alex{$n>2$} vertices and} the matrix \[ Z = \left(
                                                 \begin{array}{c}
                                                   64 \frac{p}{p-2} n^{8} A
\\
                                                    M(G) \\
                                                 \end{array}
                                               \right), \] \blue{and let } $f^* = ||Z||_p$. If
\begin{equation} \label{fapprox} |f_{\rm approx} - f^*| \leq
\frac{(p-2)^p}{{\ao \green{132}}^p p^p n^{8p+\green{3}} p},
\end{equation}
then
\[ \abs{\prt{\frac{ n}{2^p} f_{\rm approx}^p -
n\prt{\frac{64pn^{8}}{p-2}}^p} - {\rm maxcut}(G)} \leq \frac{1}{n}.
\] \label{cutapprox}
\end{proposition}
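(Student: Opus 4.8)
The plan is to reduce directly to Proposition~\ref{objectivebound}. First I would use the positive homogeneity of the map $x\mapsto\|Zx\|_p$ to pass from the unit-ball definition $\|Z\|_p=\max_{\|x\|_p=1}\|Zx\|_p$ to the equivalent problem over the sphere, $\max_{x\in S(0,n^{1/p})}\|Zx\|_p=n^{1/p}f^*$. Writing $x^*$ for a maximizer on $S(0,n^{1/p})$ (which exists by compactness), this reads $n(f^*)^p=\|Zx^*\|_p^p$. Proposition~\ref{objectivebound}, applied to this very matrix $Z$, then tells us that the entrywise rounding $x_{\rm r}\in X$ of $x^*$ satisfies $\big|\,\|Zx^*\|_p^p-\|Zx_{\rm r}\|_p^p\,\big|\le 1/n^2$.

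Next I would evaluate $\|Zx_{\rm r}\|_p^p$ exactly. Since $x_{\rm r}\in\{-1,1\}^n$, the block structure of $Z$ together with the computations already made in the proofs of Lemma~\ref{upperboundlemma} and Proposition~\ref{cutbound} give $\|Ax_{\rm r}\|_p^p=2^p n$ and $\|M(G)x_{\rm r}\|_p^p=2^p\,{\rm cut}(x_{\rm r})$, so that, setting $C:=\big(\frac{64pn^8}{p-2}\big)^p 2^p n$, one has $\|Zx_{\rm r}\|_p^p=C+2^p\,{\rm cut}(x_{\rm r})$. On the other hand, feeding the indicator vector of a maximum cut of $G$ --- which lies in $X\subset S(0,n^{1/p})$ --- into the objective gives the lower bound $n(f^*)^p\ge C+2^p\,{\rm maxcut}(G)$. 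Combining this with the previous paragraph and with ${\rm cut}(x_{\rm r})\le{\rm maxcut}(G)$ produces the two-sided squeeze
\[
C+2^p\,{\rm maxcut}(G)\ \le\ n(f^*)^p\ \le\ C+2^p\,{\rm maxcut}(G)+\frac1{n^2},
\]
which after dividing by $2^p$ is exactly $\big|\frac{n}{2^p}(f^*)^p-n\big(\frac{64pn^8}{p-2}\big)^p-{\rm maxcut}(G)\big|\le\frac1{2^p n^2}$.

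It then remains to replace $f^*$ by $f_{\rm approx}$, using the elementary inequality $|a^p-b^p|\le p\,|a-b|\,\max(a,b)^{p-1}$ (already invoked in the proof of Proposition~\ref{prop:relative_error}) and the precision hypothesis~(\ref{fapprox}). The one step that needs care is the bound on $\max(f^*,f_{\rm approx})$. From $\|Ax^*\|_p^p\le 2^p n$ (Lemma~\ref{upperboundlemma}) and $\|M(G)x^*\|_p^p\le 4^p n^2$ (as in the proof of Proposition~\ref{objectivebound}) one gets $(f^*)^p\le\big(\frac{128pn^8}{p-2}\big)^p+4^p n$; since the additive term $4^p n$ is negligible compared with the main term, this yields $f^*\le\frac{128pn^8}{p-2}(1+\eta)$ with $\eta$ bounded by, say, $32^{-p}n^{1-8p}$, and because $f_{\rm approx}$ differs from $f^*$ by far less than $1$ the same bound holds for $\max(f^*,f_{\rm approx})$ up to a factor converging to $1$. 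Substituting this and~(\ref{fapprox}) into $|(f^*)^p-f_{\rm approx}^p|\le p\,\max(f^*,f_{\rm approx})^{p-1}|f^*-f_{\rm approx}|$, the powers of $p$, of $n$ and of $(p-2)$ cancel and what survives behaves like $\frac{p-2}{p}\cdot\frac1{128}\big(\frac{128}{132}\big)^p\cdot\frac1{n^{11}}$, which is at most $\frac1{64 n^{11}}$ uniformly in $p$. Hence $\frac{n}{2^p}|(f^*)^p-f_{\rm approx}^p|\le\frac1{64\cdot 2^p n^{10}}$, and one final triangle inequality with the displayed estimate gives $\big|\frac{n}{2^p}f_{\rm approx}^p-n\big(\frac{64pn^8}{p-2}\big)^p-{\rm maxcut}(G)\big|\le\frac1{2^p n^2}+\frac1{64\cdot 2^p n^{10}}\le\frac1n$.

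I expect the main obstacle to be precisely this last bookkeeping. It is essential to bound $f^*$ by (essentially) $\frac{128pn^8}{p-2}=2\cdot\frac{64pn^8}{p-2}$ rather than by a larger constant, so that after raising to the power $p-1$ and dividing by the $132^p$ hidden in the allowed error~(\ref{fapprox}) the ratio $128^{p-1}/132^p$ decays, making the final estimate bounded uniformly in $p$. A cruder bound such as $f^*\le\frac{256pn^8}{p-2}$ would leave a factor growing like $(256/132)^p$, which blows up for large fixed $p$; the hypothesis ``$n>2$'' would then no longer suffice, as one would instead need $n$ to grow with $p$.
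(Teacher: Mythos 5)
Your proposal is correct and follows essentially the same route as the paper: pass to the sphere $S(0,n^{1/p})$, invoke Proposition \ref{objectivebound} to replace $n(f^*)^p$ by the value on $X$ (which equals $\left(\frac{64pn^8}{p-2}\right)^p 2^p n + 2^p\,\mathrm{maxcut}(G)$) up to $1/n^2$, and then control $|f_{\rm approx}^p-(f^*)^p|$ via $|a^p-b^p|\le p|a-b|\max(a,b)^{p-1}$ together with a bound on $f^*$ of order $\frac{pn^8}{p-2}$. The only (immaterial) difference is in the bookkeeping of the last step: you bound $\max(f^*,f_{\rm approx})$ by roughly $\frac{128pn^8}{p-2}$ directly from the block decomposition, gaining a slack factor $(128/132)^p$, whereas the paper bounds $f^{*p}\le\left(\frac{132pn^8}{p-2}\right)^p$ from its intermediate estimate and then doubles, so that $264^p$ cancels exactly against the $132^p 2^p$ in the tolerance.
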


\bigskip
\begin{proof}
Observe that $ n^{\frac{1}{p}} f^{*} = \max_{x \in S(0,n^{1/p})}
||Zx||_p$. It follows thus from Proposition \ref{objectivebound}
that
\begin{equation}\label{eq:boundnf*}
 \abs{nf^{*p} - \max_{x
\in X} ||Zx||_p^p }<\frac{1}{n^2}.
\end{equation} Recall that
$||Zx||_p^p = ||Mx||^p_p + \prt{64 \frac{p}{p-2} n^{8}}^p ||Ax||_p^p
$, and that $||Ax||_p^p= n2^p$ for every $x\in X$. Therefore,
$$
\max_{x \in X} ||Zx||_p^p = \prt{\frac{64pn^{8}}{p-2} }^pn2^p +
\max_{x \in X} ||Mx||_p^p = \prt{\frac{64pn^{8}}{p-2} }^pn2^p + 2^p
{\rm maxcut}(G),
$$
and {\ao combining the last two equations} we have
\begin{equation} \abs{\prt{\frac{ n}{2^p} f^{*p} - n\prt{\frac{64pn^{8}}{p-2}}^p} - {\rm
maxcut}(G)} \leq \frac{1}{2^p n^{2}}, \label{fstareq} \end{equation}
Let us now evaluate the error introduced by the approximation
$f_{\rm approx}$.
\begin{eqnarray*}
\abs{\prt{\frac{ n}{2^p} f_{\rm approx}^p -
n\prt{\frac{64pn^{8}}{p-2}}^p} - {\rm maxcut}(G)} &\leq&
\frac{1}{2^p n^{2}} + \frac{n}{2^p}\abs{f_{\rm approx}^p - f^{*p}}
\\
&\leq&\frac{1}{2^p n^{2}} + \frac{n}{2^p}\abs{f_{\rm approx} -
f^{*}} p \max( f^*, f_{\rm approx})^{p-1}.
\end{eqnarray*}
It remains to bound the last term of  this inequality. {\ao First,
we use the fact that $f^* \geq 1$ and  Eq. (\ref{fstareq}) to argue}
\begin{equation} \label{pnormupperbound}
f^{*(p-1)}\leq f^{* p}\leq 2^p\prt{\frac{64pn^{8}}{p-2}}^p +
\frac{2^p}{n} {\rm maxcut}(G) + \frac{1}{n^3} \leq 2^p
\prt{\frac{66pn^{8}}{p-2}}^p,
\end{equation}
where we have used ${\rm maxcut}(G) < n^2$ and $1\leq p/(p-2)$ for
the last inequality. By assumption, $|f_{\rm approx} - f^*|\leq 1$
and since $f^* \geq 1$,
$$
f_{\rm approx}^{(p-1)}\leq ({\ao 2 f^{*})^{p-1} \leq  (2
f^{*})^{p}\leq 4^p \prt{\frac{66pn^{8}}{p-2}}^p.}
$$
Putting it all together and using the bound on $|f_{\rm approx} -
f^*|$, we obtain (assuming $n>1$)
\begin{eqnarray*}
\abs{\prt{\frac{ n}{2^p} f_{\rm approx}^p -
n\prt{\frac{64pn^{8}}{p-2}}^p} - {\rm maxcut}(G)} &\leq&
\frac{1}{2^p n^{2}} + {\ao \frac{(p-2)^p}{\green{132}^p p^p n^{8
p+\green{3}} p} \blue{2}^p n p \prt{\frac{66pn^{8}}{p-2}}^p } \\
&\leq& \frac{1}{2^p n^{2}} +
\frac{1}{n^2}\\
&\leq& \frac{1}{n}.
\end{eqnarray*}\hspace{12 cm}
\end{proof}

\bigskip

\begin{proposition} \label{mainproposition}
Fix \blue{a rational} $p \in [1,\infty)$ with $p \neq 1,2$. Unless
$P=NP$, there is no algorithm which, given input $\epsilon > 0$ and
a matrix $Z$, computes $||Z||_p$  \green{to a relative accuracy
$\epsilon$}, in time which is \alex{polynomial in $1/\epsilon$, the dimensions of $Z$,
 and the bit-size of the entries of $Z$.}
\end{proposition}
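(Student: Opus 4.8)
The plan is to reduce the NP-hard problem of approximating ${\rm maxcut}(G)$ to within a fixed relative factor (established in \cite{H01}) to the problem of approximately computing $\|Z\|_p$, using the machinery of the previous sections. First I would handle the case $p>2$ directly. Given a graph $G$ on $n>2$ vertices, I build the matrix $Z$ of Proposition \ref{cutapprox}, whose entries are rationals of bit-size polynomial in $n$ (since $p$ is a fixed rational, $64p/(p-2)$ and $n^8$ have polynomially bounded bit-size). Suppose, for contradiction, that an algorithm computes $\|Z\|_p$ to relative accuracy $\epsilon$ in time polynomial in $1/\epsilon$, the dimensions of $Z$, and the bit-size of $Z$. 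I run it with $\epsilon$ chosen small enough that the absolute error $|f_{\rm approx}-f^*|=\epsilon f^*$ satisfies the hypothesis \eqref{fapprox} of Proposition \ref{cutapprox}; since $f^*=\|Z\|_p$ is bounded above by a quantity of the form $2\cdot(66pn^8/(p-2))$ (as in \eqref{pnormupperbound}), it suffices to take $1/\epsilon$ polynomial in $n$ (with the degree depending only on the fixed $p$). Then Proposition \ref{cutapprox} gives a polynomial-time-computable quantity agreeing with ${\rm maxcut}(G)$ to within $1/n < 1$; since ${\rm maxcut}(G)$ is an integer, rounding recovers it exactly, contradicting the NP-hardness of even approximating maxcut. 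This proves the claim for rational $p>2$.

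For rational $p \in [1,2)$, $p \neq 1$, I would use a transposition/duality argument. The key observation is that the $p$-norm of a matrix equals the $q$-norm of its transpose, where $1/p+1/q=1$: indeed $\|B\|_p = \max_{\|x\|_p=1}\|Bx\|_p = \max_{\|x\|_p=1,\|y\|_q=1} y^T B x = \|B^T\|_q$ by the definition of the dual norm and Hölder's inequality. For $p\in(1,2)$ the conjugate exponent $q=p/(p-1)$ lies in $(2,\infty)$ and is again rational, so an algorithm approximating $\|\cdot\|_p$ to relative accuracy $\epsilon$ in polynomial time would, applied to $Z^T$, approximate $\|\cdot\|_q$ to relative accuracy $\epsilon$ in polynomial time (transposition does not change the bit-size or, up to a factor of $2$, the dimensions). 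This contradicts the $p>2$ case already established. Hence no such algorithm exists for any rational $p\in[1,\infty)\setminus\{1,2\}$.

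The main obstacle is the bookkeeping in the first case: verifying that the target relative accuracy $\epsilon$ needed to force \eqref{fapprox} is only inverse-polynomially small in $n$, so that the hypothesized algorithm really does run in time polynomial in $n$. This requires an a priori polynomial upper bound on $f^*=\|Z\|_p$ — which is exactly \eqref{pnormupperbound} — together with the explicit form of the right-hand side of \eqref{fapprox}; both are already in hand, so the argument is routine but must be spelled out. A minor additional point is that one should state the reduction in terms of square matrices if desired, which is handled exactly as in the Remark following Theorem \ref{inftykthm} (pad $M(G)$, and $A$, with zero columns/rows), at the cost of only a polynomial blow-up in dimension.
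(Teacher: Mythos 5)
Your proposal is correct and follows essentially the same route as the paper: reduce from maximum cut via Proposition~\ref{cutapprox}, use the a priori bound \eqref{pnormupperbound} on $\|Z\|_p$ to convert relative error into the required absolute error with $1/\epsilon$ only polynomial in $n$, recover the integer ${\rm maxcut}(G)$ by rounding, and handle $p\in(1,2)$ by the duality $\|Z\|_p=\|Z^T\|_{p'}$. The only detail you gloss over is that for non-integer rational $p$ the quantity $n(f_{\rm approx}/2)^p - n(64pn^8/(p-2))^p$ must itself be approximated (the paper allots an extra additive $1/4$ for this, keeping the total error below $1/2$), but this is a routine point.
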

\bigskip

\begin{proof}
\green{Suppose first that $p>2$. We show that such an algorithm
could be used to build a polynomial-time algorithm solving the
maximum cut problem. \jrev{For} a graph $G$ on $n$ vertices, fix
$$\epsilon =   \prt{132^p\prt{\frac{p}{p-2}}^pn^{8p+\green{3}}p}^{-1} \cdot \prt{\alex{132} \prt{\frac{p}{p-2}}n^{8}}^{-1} ,$$
build the matrix $Z$ as in Proposition \ref{cutapprox}, and compute
the norm of $Z$; let $f_{\rm approx}$ be the output of the
algorithm. Observe that \alex{By Eq. (\ref{pnormupperbound})
$$
\alex{||Z||_p} \leq \frac{132 p n^8}{p-2},
$$ so,}
$$
\abs{f_{\rm approx}- \norm{Z}_p}\leq \epsilon \norm{Z}_p \leq
\epsilon \prt{\alex{132}\frac{p}{p-2}n^8} \leq
\prt{132^p\prt{\frac{p}{p-2}}^pn^{8p+3}p}^{-1}
$$
It follows then from Proposition \ref{cutapprox} that}
\[  n \prt{\frac{f_{\rm approx}}{2}}^p - n\prt{64 \cdot \prt{\frac{p}{p-2}}
n^{8}}^p  \] \blue{is an approximation of the maximum cut with an
additive error at most $1/n$. {\ao Once we have  $f_{\rm approx}$}},
we can approximate this number in polynomial time to an
\alex{additive} accuracy of $1/4$.  This gives an additive error
$1/4+1/n$ approximation algorithm for maximum cut, and since the
maximum cut \jrev{is} always an integer, this means we can compute
it exactly when $n>4$. However, maximum cut is an NP-hard problem
\cite{GJ79}.

For the case of $p \in (1,2)$, NP-hardness follows from the analysis
of the case of $p>2$ since \alex{for any matrix $Z$, $||Z||_p = ||Z^T||_{p'}$} where $1/p +
1/p'=1$.
\end{proof}

\bigskip

{\aoc \noindent {\bf Remark:} In contrast to Theorem \ref{inftykthm}
which proves the NP-hardness of computing the matrix $\infty,k$ norm
to relative accuracy $\epsilon = 1/C(p)$, for some function $C(p)$,
Proposition  \ref{mainproposition} proves the NP-hardness of
computing the $p$-norm to accuracy $1/C'(p) n^{8p+11}$, for some
function $C'(p)$. In the latter case, $\epsilon$ depends on $n$.  }

\bigskip

\alex{Our final theorem demonstrates that the $p$-norm is still hard
to compute when restricted to matrices with entries in
$\{-1,0,1\}$.}

\bigskip

{\aoc \begin{theorem} \label{mainthm_bis} Fix a rational $p \in
[1,\infty)$ with $p \neq 1,2$. Unless $P=NP$, there is no algorithm
which, given input $\epsilon$ and a matrix $M$ with entries in
$\{-1,0,1\}$, computes $||M||_p$ to relative accuracy
$\epsilon$, in time which is polynomial in $\epsilon^{-1}$ and the
dimensions of the matrix.
\end{theorem}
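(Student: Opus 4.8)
The plan is to recycle the reduction from maximum cut of Proposition~\ref{mainproposition}, eliminating the one feature of the matrix $Z$ incompatible with $\{-1,0,1\}$ entries: the large rational scalar $\alpha_0 := 64pn^8/(p-2)$ multiplying the block $A$. The crucial observation is that scaling a block of rows merely scales its contribution to $\|\cdot\|_p^p$: if $B$ is any matrix and $B^{(k)}$ is obtained by replacing each row of $B$ by $k$ identical copies of it, then $\|B^{(k)}x\|_p^p = k\,\|Bx\|_p^p$ for all $x$. So stacking $k$ copies of the rows of $A$ has on $\|\cdot\|_p^p$ exactly the effect of multiplying $A$ by $k^{1/p}$ --- and this can be done without leaving $\{-1,0,1\}$.

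Concretely I would take $k$ to be the least integer with $k \ge \alpha_0^{\,p} = \bigl(64pn^8/(p-2)\bigr)^p$. Since $p=s/t$ is a fixed rational, $k$ is computable in time polynomial in $n$ (form the rational $\alpha_0^{\,s}$ exactly and take the ceiling of its $t$-th root), and $k = O(n^{8p})$ is polynomially bounded for fixed $p$. Replace the matrix $Z$ of Propositions~\ref{extremabound}--\ref{cutapprox} by
\[
\hat Z = \left(\begin{array}{c} A^{(k)} \\ M(G) \end{array}\right),
\]
which has entries in $\{-1,0,1\}$, has $n$ columns, and has $2nk + |E| = O(n^{8p+1})$ rows. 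By the observation above $\|\hat Z x\|_p^p = k\,\|Ax\|_p^p + \|M(G)x\|_p^p$, so $\hat Z$ plays exactly the role of $Z$ with the quantity $\alpha_0^{\,p}$ replaced everywhere by the integer $k$.

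It then remains to check that Propositions~\ref{extremabound}, \ref{objectivebound} and \ref{cutapprox} survive this substitution, which is where the (light) work lies. Because $k \ge \alpha_0^{\,p}$, the weight of the $M(G)$-block relative to the $A$-block has only decreased compared with the matrix $\tilde Z$ of Proposition~\ref{extremabound}, so the concentration estimate there (via Lemma~\ref{deficiencybound}) holds a fortiori; and the error bounds of Propositions~\ref{objectivebound} and \ref{cutapprox} are monotone in, and only polynomially sensitive to, that weight, so the relative accuracy $\epsilon$ one must demand of the hypothetical $p$-norm algorithm stays inverse-polynomial in $n$ --- hence both $\epsilon^{-1}$ and the dimensions of $\hat Z$ are polynomial in $n$. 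Running the algorithm on $\hat Z$ at this $\epsilon$ then takes polynomial time and, exactly as in Proposition~\ref{mainproposition}, produces an additive $(1/4+1/n)$-approximation --- hence, for $n>4$, the exact value --- of ${\rm maxcut}(G)$, contradicting \cite{GJ79}. This handles $p>2$; for $p\in(1,2)$ one uses $\|M\|_p = \|M^T\|_{p'}$ with $1/p+1/p'=1$, noting that transposing a $\{-1,0,1\}$ matrix yields a $\{-1,0,1\}$ matrix. The only real obstacle is that for non-integer $p$ one cannot realize the scalar $\alpha_0$ exactly with rational, let alone $\{-1,0,1\}$, entries; rounding $\alpha_0^{\,p}$ up to an integer and duplicating rows circumvents this, at the negligible cost of re-verifying the approximation inequalities just discussed.
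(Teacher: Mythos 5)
Your proposal is correct and follows essentially the same route as the paper: the paper likewise replaces the scalar $64pn^8/(p-2)$ by stacking $\big\lceil\big(64pn^8/(p-2)\big)^p\big\rceil$ copies of $A$ above $M(G)$, keeping the entries in $\{-1,0,1\}$ and the dimensions polynomial in $n$ for fixed rational $p$, and handles $p\in(1,2)$ by duality. Your version is in fact slightly more careful than the paper's, since you explicitly note that the rounding perturbs the effective coefficient and that the approximation inequalities must be re-checked (they only improve, as the relative weight of the $M(G)$ block decreases).
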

\bigskip
\begin{proof}  \alex{As before, it suffices to prove the theorem for the
case of $p>2$; the case of $p \in (1,2)$ follows because $||Z||_p = ||Z^T||_{p'}$ where $1/p+1/p'=1$.}

 Define
\[ Z^* = \left(
                                                 \begin{array}{c}
                                                   \Big(\Big\lceil \prt{64 \frac{p}{p-2} n^{8}} \Big\rceil\Big) A
\\
                                                    M(G) \\
                                                 \end{array}
                                               \right) \] where $\lceil \cdot \rceil$ refers to
                                               rounding up to the closest integer. Observe
that by an argument similar to the proof of the previous
proposition, computing $||Z^*||_p$ to an accuracy $\epsilon = (C(p)
n^{8p+11})^{-1}$ is NP-hard for some function $C(p)$. But if we
define
\[ Z^{**} = \left(
                                                    \begin{array}{c}
                                                      A \\
                                                      A \\
                                                      \vdots \\
                                                      A \\
                                                      M \\
                                                    \end{array}
                                                  \right)\] where $A$ is repeated \jrev{$\Big\lceil \prt{64 \frac{p}{p-2} n^{8}}^{p} \Big\rceil$} times,
\alex{then} \[ ||Z^{**}||_p = ||Z^*||_p.\] The matrix $Z^{**}$ has
entries in $\{-1,0,1\}$ and its size is polynomial in $n$, so it
follows
                                                  that it is NP-hard to compute $||Z^{**}||_p$ within the same $\epsilon$.
\end{proof}}

\bigskip

\alex{ \noindent {\bf Remark:} Observe that the argument also suffices to show that computing
the $p$-norm of {\em square} matrices with entries in $\{-1,0,1\}$ is NP-hard: simply pad
each row of $Z^{**}$ with enough zeros to make it square. Note that this trick was also
used in Section \ref{infinitysubsection}.}

\section{Concluding remarks} \label{sec:ccl}

We have proved the NP-hardness of computing the matrix $p$-norm
approximately with {\em \green{relative}} error {\aoc $\epsilon =
1/C(p) n^{8p+11}$, where $C(p)$ is some function of $p$; and the
NP-hardness of computing the matrix $\infty,p$ norm to  some fixed
relative accuracy depending on $p$.} {\aoc We finish with some
technical remarks about various possible extensions of the theorem:}

\bigskip

\begin{itemize}
\item \green{Due to the linear property of the norm $\norm{\alpha
A}= \abs{\alpha}\norm{A}$, {\aoc our results also imply} the
NP-hardness of approximating the matrix $p$-norm with any fixed or
polynomially growing {\em additive error}}.

\item {\aoc Our construction is also implies the hardness of
computing the matrix $p$-norm for any irrational number $p>1$ for
which a polynomial time algorithm to approximate $x^{p}$ is
available.}

\item{Our construction may also be used to \blue{provide a new
proof of} the NP-hardness of the $|| \cdot ||_{p,q}$ norm when $p
\alex{>} q$, \blue{which has been established in \cite{S05}}.
Indeed, \blue{it} rests on the matrix $A$ with the property that
$\max ||Ax||_p/||x||_p$ occurs at the vectors $x\in \{-1,1\}^n$. We
use this matrix $A$ to construct the matrix $Z=(\alpha A ~M)^T$ for
large $\alpha$, and argue that $\max ||Zx||_p/||x||_p$ occurs close
to the vectors $x\in \{-1,1\}^n$. At these vectors, it happens $Ax$
is a constant, so we are effectively maximizing $||Mx||_p$, which is
hard as shown in Section
\ref{infinitysubsection}.\\
If one could come up with such a matrix for the case of the mixed
$||\cdot||_{p,q}$ norm, one could prove NP-hardness by following the
same argument. However, when $p>q$, actually the very same matrix
$A$ works. Indeed, one could simply argue that
\[ \alex{||A||_{p,q} = } \max_{\alex{x \neq 0}} \frac{||Ax||_q}{||x||_p} = \max_{\alex{x \neq 0}}
\frac{||Ax||_q}{||x||_q} \frac{||x||_q}{||x||_p},\] and since the
maximum of $||x||_q/||x||_p$ when $1 \leq q < p \leq \infty$ occurs
at the vectors $x\in \{-1,1\}^n$, we have that both terms on the
right are maximized at $x=\in \{-1,1\}^n$, that is where
$||Ax||_q/||x||_p$ is maximized.}

\item{Finally, we note that our goal was only to show existence of
a polynomial-time reduction from the maximum cut problem to the
problem of matrix  $p$-norm computation.  It is possible that more
economical reductions which scale more gracefully with $n$ and $p$
exist.}
\end{itemize}

\end{document}